\def\R{\mathbb R}
\def\C{\mathbb C}
\def\Z{\mathbb Z}
\def\N{\mathbb N}
\def\A{\mathscr{A}}
\def\1{\mathbbm 1}
\def\k{\kappa}
\newcommand{\interior}[1]{\raise0.2ex\hbox{$\displaystyle{\mathop{#1}^{\circ}}$}}
\renewcommand\phi{\varphi}
\renewcommand\emptyset{\varnothing}
\newtheorem{theorem}{Theorem}[section]
\newtheorem{prop}{Proposition}[section]
\newtheorem{lem}{Lemma}[section]
\newtheorem*{theorem*}{Theorem}
\newtheorem{proposition}[theorem]{Proposition}
\newtheorem{defn}[theorem]{Definition}
\numberwithin{equation}{section} 
\theoremstyle{remark}
\newtheorem*{remark*}{Remark}
\newtheorem{remark}[theorem]{Remark}
\newcounter{Examplecount}
\theoremstyle{remark}
\long\def\symbolfootnote[#1]#2{\begingroup%
\def\thefootnote{\fnsymbol{footnote}}\footnote[#1]{#2}\endgroup}
\begin{document}

\title{On the norm of the $q$-circular operator}
\author{Natasha Blitvi\'c}

\begin{abstract} 
The \emph{$q$-commutation relations}, formulated in the setting of the \emph{$q$-Fock} space of Bo\.zjeko and Speicher,
interpolate between the classical commutation relations (CCR) and the classical anti-commutation relations (CAR) defined on the classical bosonic and fermionic Fock spaces, respectively. Interpreting the $q$-Fock space as an algebra of ``random variables" exhibiting a specific commutativity structure, one can construct the so-called \emph{$q$-semicircular} and \emph{$q$-circular} operators  
acting as $q$-deformations of the classical Gaussian and complex Gaussian random variables, respectively. While the $q$-semicircular operator is generally well understood, many basic properties of the $q$-circular operator (in particular, a tractable expression for its norm) remain elusive. 

Inspired by the combinatorial approach to free probability, we revist the combinatorial formulations of the $q$-semicircular and $q$-circular operators. We point out that the combinatorics of the $q$-semicircular operator are given by the chord-crossing diagrams developed by Touchard in the 1950s and distilled by Riordan in 1974. This observation leads to a mostly closed-form (viz. finite alternating sum) expression for the $2n$-norm of the $q$-semicircular. Extending these norms as a function in $q$ onto the complex unit ball and taking the $n\to\infty$ limit, we recover the familiar expression for the norm of the $q$-semicircular and show that the convergence is uniform on the compact subsets of the unit ball. In contrast, the $2n$-norms of the $q$-circular operator are given by a restriction of the chord-crossing diagrams to diagrams whose chords are \emph{parity-reversing} (i.e. chords which, labeling the points consecutively, connect even to odd integers only), which have not yet been characterized in the combinatorial literature. We derive certain combinatorial properties of these objects, including closed-form expressions for the number of such diagrams of any size with up to eleven crossings. These properties enable us to conclude that the $2n$-norms of the $q$-circular operator are significantly less well behaved than those of $q$-semicircular operator. 
\end{abstract}

\maketitle

\symbolfootnote[0]{This work was partially supported by NSERC Canada and NSF Grant DMS-0701162.}

\vspace{-0.4in}

\section{Introduction and Background}

An algebraic framework for describing Bose-Einstein and Fermi-Dirac statistics is provided by the bosonic and fermionic 
Fock spaces $\mathcal F_{+}(\mathscr H)$ and $\mathcal F_{-}(\mathscr H)$ over a Hilbert space $\mathscr H$. 
The creation and anihilation operators, $\ell$ and $\ell^\ast$, on these classical Fock spaces satisfy the canonical commutation relations (CCR), describing the interactions of Bosons, or, respectively, the canonical anti-commutation relations (CAR) describing Fermions. Namely,
\begin{align*}&\ell^*(g)\ell(f)-\ell(f)\ell^*(g) = (f, g)_{\mathscr H}\mathbf{1}_{\mathcal F_+ (\mathscr H )} &\quad\quad\text{(CCR)}\\
&\ell^*(g)\ell(f)+\ell(f)\ell^*(g) = (f, g)_{\mathscr H}\mathbf{1}_{\mathcal F_- (\mathscr H )}&\quad\quad\text{(CAR)}\end{align*}
for $f,g\in\mathscr H$. A non-relativistic field theory interpolating between the Bose-Einstein statistics and the Fermi-Dirac statistics can be realized in the setting of the so-called \emph{$q$-Fock space} of Bo\.zejko and Speicher \cite{Bozejko1991}.  Constructed analogously to the classical Fock spaces, which are obtained by completing symmetrized/anti-symmetrized direct sums of single-particle Hilbert spaces via an appropriate inner product, the $q$-Fock space admits the creation and anihilation operators $\ell_q$ and $\ell_q^\ast$ satisfying the \emph{q-commutation relation} ($q$-CR) \cite{Bozejko1991}:\begin{align*}
&\ell_q^*(g)\ell_q(f)-q \ell_q(f)\ell_q^*(g) = (f, g)\mathbf{1}_{\mathcal F_q (\mathscr H )},\quad q\in (-1,1)& \quad\quad \text(q\text{-CR)}.\end{align*}

Interpreting the bounded operators on $\mathcal F_q(\mathscr H)$ as (admittedly non-classical) random variables, the above framework yields a continuum of probability theories admitting various degrees of commutativity. We refer to this interpolation as the \emph{$q$-deformed probability.} Of special interest is the fact that, for $q=0$, the $q$-Fock space construction of \cite{Bozejko1991} yields the full Boltzmann Fock space of free probability. Within the last decade, the diagramatic approach to free probability has yielded both a collection of powerful results and a beautiful theory \cite{NicaSpeicher}, and it is natural to hope that gains may be derived by an analogous approach to $q$-deformed probability.  Adopting the combinatorial view and extending the non-crossing partitions that underlie free probability to partitions with fixed numbers of crossings, we characterize the behavior of the norms of two $q$-commutative random variables that should be interpreted as the Gaussian and complex Gaussian random variables suitably deformed for to the $q$-commutative setting. The surprising consequence is that viewed through this lens, the $q$-deformation of the complex Gaussian random variable turns out to be drastically worse-behaved than the analogously deformed Gaussian.

More concretely, introduced in \cite{Bozejko1991} by Bo\.zejko and Speicher, the $q$-deformation of the Gaussian random variable is termed the \emph{q-Gaussian} or \emph{q-semicircular} distribution. It corresponds to the element $s_q\in \mathscr B(\mathcal F_q(\mathscr H))$ given by $s_q=\ell_q+\ell_q^*$. The $q$-semicircular element recovers the classical Gaussian distribution in the  limit $q\to 1$, whereas the $q=0$ case yields the semi-circular element of Voiculescu \cite{Voiculescu1992}. More generally, Bo\.zejko, K{\"u}mmerer and Speicher \cite{Bozejko1997} succeeded in defining a Brownian motion on this space, while Biane~\cite{Biane1997b}, Mingo \& Nica~\cite{Mingo2001}, and Kemp~\cite{Kemp2005} constructed an analogously deformed circular element, that is, the $q$-deformation of the \emph{complex} Gaussian. The resulting \emph{$q$-circular element} $c_q\in \mathscr B(\mathcal F_q(\mathscr H))$ is expressed as $$c_q=\frac{\ell_1+\ell_1^*+i(\ell_2+\ell_2^*)}{\sqrt 2},$$ where $\ell_1,\ell_2$ are the creation operators corresponding to two orthonormal vectors in $\mathscr H$. 
As expected, considering the $q\to 1$ limit of $c_q$ recovers the complex Gaussian distribution, while letting $q=0$ yields the circular element of free probability, realized as a suitably normalized sum of two free semi-circular elements. However, while the $q$-semicircular distribution is well understood, both the analytical and combinatorial structure of the $q$-circular operator remain in the dark. Our starting point is a fundamental property of the semi-circular element $s_q$, namely the fact \cite{Bozejko1991} that the norm of $s_q$ is given by \begin{equation} ||s_q||=\frac{2}{\sqrt{1-q}}.
\label{sproperties}
\end{equation}

While the analogue of the above expression for the $q$-circular operator remains elusive, we point out and characterize a striking discrepancy in the behavior of the $p$-norms of the two operators, for $p$ an even integer. For $n\in\mathbb N$, the $2n$-\emph{norms} of the $q$-circular and the $q$-semicircular are naturally defined when the two operators are interpreted as elements of a $C^\ast$-probability space (cf. Section~\ref{background}). Focusing on the \emph{moments} of $c_q$ and $s_q$, the $2n$-norms of the two operators
admit a combinatorial definition that relies on the notion of crossings in pairings (perfect matchings) on an ordered set.
Specifically, letting $\{\gamma_n(q)\}$ and $\{\lambda_n(q)\}$ respectively denote the sequences of the $2n$-norms of the $q$-semicircular and $q$-circular operators, the corresponding operator norms are realized as the limits
\begin{equation}||s_q||=\lim_{n\to\infty} \gamma_n(q).\label{convergenceEqs}\end{equation}
\begin{equation}||c_q||=\lim_{n\to\infty} \lambda_n(q).\label{convergenceEq}\end{equation}
Viewed as functions in $q$, $\gamma_n$ and $\lambda_n$ turn out to be given as $1/(2n)^\text{th}$ powers of certain combinatorial generating functions. 
Namely, the generating functions associated with $\gamma_n$ count the sequences of crossings (cf. Section\ref{background}) of pairings on the set $[2n]:=\{1,\ldots,2n\}$, whereas those associated with $\lambda_n$ count crossings in \emph{parity-reversing pairings}, introduced in Section~\ref{background}. 

Counting crossings in parity-reversing pairings is a hard combinatorial problem, akin to problems of Touchard~\cite{Touchard1952,Touchard1950,Touchard1950-2} and Corteel~\cite{Corteel2007}, overviewed in Section~\ref{background}. The former was originally solved by moderately strenuous manipulations involving continued fraction expansions \cite{Touchard1952,Touchard1950,Touchard1950-2,Riordan1975} and actively revisited over the course of the four decades that followed. The original solution was reproduced via a mix of bijection and continued fraction expansion \cite{Read1979}, orthogonal polynomials \cite{Ismail1987}, and several levels of non-obvious bijections \cite{Penaud1995}. The problem of Corteel is posed and solved in \cite{Corteel2007} via a bijection to the previous work on the enumeration of totally positive Grassmann cells by Williams \cite{Williams2005}. 

Without fully solving the enumeration problem at hand, we will derive some basic properties of crossings in the parity-reversing pairings. Using these and similar expressions, we show an unexpected property of the norm of the $q$-circular operator. Namely, it turns out that unlike the case of the $q$-semicircular operator, that the convergence of the norms of the $q$-circular operator in (\ref{convergenceEq}) is not particularly ``nice". Our main results are the following.

\begin{theorem} Let $\gamma_1,\gamma_2,\ldots$ be the sequence of $2n$-norms of the $q$-semicircular operator and consider the sequence of complex-valued functions $\tilde \gamma_1,\tilde \gamma_2,\ldots$ defined on the unit ball $B_{\mathbb C}:= \{z\in\mathbb C, |z|<1\}$ as
$$\tilde \gamma_n(q)=\left(\frac{1}{(1-q)^n}\sum_{k=-n}^n(-1)^kq^{k(k-1)/2}{{2n}\choose{n+k}}\right)^\frac{1}{2n}.$$
Then $\tilde \gamma_n$ analytically extend $\gamma_n$ on $B_{\mathbb C}$. Moreover,
$$\tilde \gamma_n(q)\to\frac{2}{\sqrt{1-q}},\quad q\in B_{\mathbb C}$$
and the convergence is uniform on compact subsets of $B_{\mathbb C}$.
\label{thmgood}
\end{theorem}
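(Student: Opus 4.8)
The plan is to realize $\tilde\gamma_n$ as the unique holomorphic $(2n)$-th root on $B_{\mathbb C}$ of the Touchard--Riordan polynomial that is positive on $(-1,1)$, to bound $|\tilde\gamma_n|$ uniformly in $n$ on compact subsets of $B_{\mathbb C}$, and then to invoke Vitali's convergence theorem together with the already-known limit on the real interval. Write
\[
T_n(q):=\sum_{M}q^{\mathrm{cr}(M)}=\frac{S_n(q)}{(1-q)^n},\qquad S_n(q):=\sum_{k=-n}^{n}(-1)^k q^{k(k-1)/2}\binom{2n}{n+k},
\]
the first sum running over perfect matchings $M$ of $[2n]$; by the identification of $\gamma_n$ with the crossing generating function and the Touchard--Riordan formula established above, $\gamma_n(q)^{2n}=T_n(q)$ for $q\in(-1,1)$. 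Since $T_n$ is manifestly a polynomial and coincides with $S_n(q)/(1-q)^n$ on an interval, $(1-q)^n$ divides $S_n$, the polynomial $T_n$ has nonnegative coefficients with $T_n(0)=\binom{2n}{n}-\binom{2n}{n+1}$, and $T_n(q)=\gamma_n(q)^{2n}>0$ on $(-1,1)$ because $\gamma_n(q)$ is the $(2n)$-norm of the nonzero element $s_q$. The first real step is then the key lemma: \emph{$T_n$ has no zeros in $B_{\mathbb C}$}, equivalently every root of the $n$-th Touchard--Riordan polynomial has modulus $>1$. Granting this and using that $B_{\mathbb C}$ is simply connected, one sets $\tilde\gamma_n:=\exp\big(\tfrac1{2n}\log T_n\big)$ with the branch of $\log T_n$ pinned down so as to be real on $(-1,1)$; this function is holomorphic on $B_{\mathbb C}$, equals the closed-form expression in the statement, and restricts to $\gamma_n$ on $(-1,1)$.

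For the uniform bound, note that $|q|<1$ forces $|q|^{k(k-1)/2}\le 1$, so $|S_n(q)|\le\sum_{k=-n}^n\binom{2n}{n+k}=4^n$ and hence
\[
|\tilde\gamma_n(q)|=|T_n(q)|^{1/(2n)}=\frac{|S_n(q)|^{1/(2n)}}{|1-q|^{1/2}}\le\frac{2}{|1-q|^{1/2}},\qquad q\in B_{\mathbb C}.
\]
On a compact $K\subset B_{\mathbb C}$ one has $\inf_{q\in K}|1-q|>0$, so $\{\tilde\gamma_n\}$ is uniformly bounded on $K$, hence a normal family on $B_{\mathbb C}$. By \eqref{sproperties} and \eqref{convergenceEqs}, $\tilde\gamma_n(q)=\gamma_n(q)\to 2/\sqrt{1-q}$ for every $q\in(-1,1)$, a subset of $B_{\mathbb C}$ with accumulation points; Vitali's theorem then gives that $\tilde\gamma_n$ converges uniformly on compact subsets of $B_{\mathbb C}$ to a holomorphic limit $g$. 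Since the principal root $2/\sqrt{1-q}$ is holomorphic on $B_{\mathbb C}$ (there $\mathrm{Re}(1-q)>0$) and agrees with $g$ on $(-1,1)$, the identity theorem forces $g(q)=2/\sqrt{1-q}$ on all of $B_{\mathbb C}$, as claimed.

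I expect the main obstacle to be the zero-free lemma for $T_n$ on $B_{\mathbb C}$; everything after it is soft. A natural route uses Flajolet's continued fraction, which writes $T_n(q)=\sum_{\pi}\prod_{i}[h_i(\pi)]_q$ over Dyck paths $\pi$ with weights $[h]_q=(1-q^h)/(1-q)$ that are each individually zero-free on $B_{\mathbb C}$, the remaining difficulty being to rule out cancellation among the summands for $|q|<1$. Alternatively one may attempt a direct estimate, isolating the terms of $S_n(q)$ with $k$ near $0$ and dominating the tail $\sum_{|k|\ge 2}|q|^{k(k-1)/2}\binom{2n}{n+k}$, or appeal to a root-localization result for the Touchard--Riordan polynomials.
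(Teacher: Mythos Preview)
Your overall strategy is sound and genuinely different from the paper's. You reduce everything to two ingredients: the zero-freeness of $T_n$ on $B_{\mathbb C}$ (so that a holomorphic $(2n)$-th root exists), and the trivial upper bound $|S_n(q)|\le 4^n$ (giving a uniform bound on compacts). Montel plus Vitali plus the cited real-line limit \eqref{sproperties}--\eqref{convergenceEqs} then do the rest. This is cleaner than what the paper does.

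The gap, as you yourself flag, is the zero-free lemma: you state it, grant it, and sketch two possible attacks without completing either. The Dyck-path formula $T_n(q)=\sum_\pi\prod_i [h_i(\pi)]_q$ indeed has each summand nonvanishing on $B_{\mathbb C}$, but ruling out cancellation among the (complex) summands for $|q|<1$ is not obvious and you do not carry it out. So as written the proposal is incomplete at precisely the point you identify.

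It is worth noting that the paper's proof supplies exactly this missing piece, by the route you list second. The paper does \emph{not} use normal families; instead it proves a two-sided estimate on $|S_n(q)|$ directly. The upper bound is essentially yours. For the lower bound the paper isolates the central term $\binom{2n}{n}$ in $S_n$, controls the two half-sums by $(1+|q|)^{2n+1}$, and obtains
\[
|S_n(q)|\ \ge\ \binom{2n}{n}-\binom{2n}{n+1}-2(1+|q|)^{2n+1}\ =\ \frac{1}{n+1}\binom{2n}{n}-2(1+|q|)^{2n+1},
\]
which is positive on any fixed compact $K\subset B_{\mathbb C}$ once $n$ is large, since $\binom{2n}{n}/(n+1)\sim 4^n/(n^{3/2}\sqrt\pi)$ while $(1+|q|)^{2n+1}\le (1+\rho)^{2n+1}$ with $\rho=\max_K|q|<1$. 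This simultaneously gives the (eventual) zero-freeness you need and the pointwise limit $|S_n(q)|^{1/(2n)}\to 2$; the paper then argues uniform convergence on $K$ directly from these explicit $q$-independent bounds, together with the observation that the phase $e^{i\arg T_n(q)/(2n)}\to 1$ uniformly.

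In short: your Vitali argument is a nice shortcut that replaces the paper's explicit uniform-convergence bookkeeping, but the paper's ``isolate the central binomial'' lower bound is precisely what you are missing, and plugging it in (even just for $n$ large, which suffices for the convergence claims) would complete your proof.
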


\begin{theorem} Let $\lambda_1,\lambda_2,\ldots$ be the sequence of $2n$-norms of the $q$-circular operator. Then, there exists no complex neighborhood of the origin on which $\lambda_n$ have analytic continuations that converge uniformly on compact sets.
\label{thmbad}
\end{theorem}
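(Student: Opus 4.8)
The plan is to establish Theorem~\ref{thmbad} by contradiction, exploiting the contrast with the $q$-semicircular case. Suppose, for contradiction, that there is a complex neighborhood $U$ of the origin on which each $\lambda_n$ extends to an analytic function $\tilde\lambda_n$, and that $\tilde\lambda_n \to \Lambda$ uniformly on compact subsets of $U$. Then the limit $\Lambda$ is itself analytic on $U$, and on the real interval $(-1,1)\cap U$ it must agree with $\|c_q\|$ by \eqref{convergenceEq}. The strategy is to extract enough information about the Taylor coefficients of $\tilde\lambda_n$ near $0$ from the combinatorics of crossings in parity-reversing pairings to show that no such analytic limit can exist --- roughly, that the coefficients grow too fast, or oscillate, in a way incompatible with uniform convergence.

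More concretely, I would first record that $\lambda_n(q)^{2n}$ is a polynomial in $q$, say $P_n(q) = \sum_{j\ge 0} a_{n,j}\, q^j$, where $a_{n,j}$ counts parity-reversing pairings on $[2n]$ with exactly $j$ crossings (this is the combinatorial description promised in the excerpt). The key analytic point is that taking a $1/(2n)$-th root is delicate: since $\lambda_n(0)^{2n} = a_{n,0}$ is the number of non-crossing parity-reversing pairings on $[2n]$ (a Catalan number), $\lambda_n(0) = a_{n,0}^{1/(2n)}$, which tends to $2$ as $n\to\infty$, consistent with $\|c_0\|=2$. The obstruction must therefore come from the \emph{derivatives} at the origin. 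I would compute $\tilde\lambda_n'(0)$ and, if needed, higher derivatives, in terms of the ratios $a_{n,1}/a_{n,0}$, etc., using the chain rule for the $1/(2n)$-th power: for instance $\tilde\lambda_n'(0) = \frac{1}{2n}\, a_{n,0}^{1/(2n)-1}\, a_{n,1}$. Then I would invoke the closed-form expressions for $a_{n,1}$ (and further low-crossing counts, available from the paper's combinatorial analysis of parity-reversing pairings "with up to eleven crossings") to determine the asymptotics of $a_{n,1}/a_{n,0}$ in $n$.

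The heart of the argument is then a normal-families / equicontinuity contradiction. If $\tilde\lambda_n \to \Lambda$ uniformly on a closed disk $\bar D(0,r)\subset U$, then by Cauchy's estimates the derivatives converge: $\tilde\lambda_n^{(k)}(0) \to \Lambda^{(k)}(0)$ for every $k$, and in particular the sequences $\{\tilde\lambda_n^{(k)}(0)\}_n$ are bounded. So it suffices to exhibit one $k$ for which $\tilde\lambda_n^{(k)}(0)$ is unbounded in $n$ — or, more robustly, to show the family $\{\tilde\lambda_n\}$ fails to be locally bounded (e.g. $\max_{|q|=r}|\tilde\lambda_n(q)|\to\infty$), which already precludes uniform convergence on that circle. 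I expect that one of these quantities, driven by the growth of $a_{n,j}$ relative to $a_{n,0}$, blows up: the point is that $P_n(q)$ has a genuine pole's worth of growth near $q=1$ on the real axis (reflecting that $\|c_q\|$ presumably behaves worse than $(1-q)^{-1/2}$, or at any rate that $\lambda_n$ does not extend past the unit ball with controlled growth), and extracting a $2n$-th root cannot tame the coefficient growth near $0$ the way it did for $\gamma_n$ via the clean alternating-sum formula. A subtlety to handle carefully: the $1/(2n)$-th root is only unambiguously analytic near $0$ because $a_{n,0}\neq 0$, and one must fix the branch; I would take the branch that is positive on the real axis, which is forced by \eqref{convergenceEq}, so there is no freedom there.

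The main obstacle I anticipate is that, lacking a closed form for $\lambda_n(q)$, one cannot directly see its analytic behavior; the whole argument must be leveraged through the partial combinatorial data (the crossing counts $a_{n,j}$ for small $j$, with explicit formulas up to eleven crossings) together with crude but valid bounds such as $P_n(q)\le C_n$ on circles and comparison with the $q$-semicircular counts. I would therefore expect the proof to proceed by (i) fixing $r<1$; (ii) showing $a_{n,j}$ for the relevant small $j$ grow fast enough that some fixed low-order Taylor coefficient of $\tilde\lambda_n$ at $0$ is unbounded — most plausibly by showing $a_{n,1}/a_{n,0}$, and hence $\tilde\lambda_n'(0)$, is unbounded, since $\tilde\lambda_n'(0)=\frac{1}{2n}a_{n,0}^{1/(2n)-1}a_{n,1}$ and $a_{n,0}^{1/(2n)}\to 2$ contributes only a bounded factor while $a_{n,1}$ outpaces $n\,a_{n,0}$; and (iii) concluding via Cauchy's estimate that no uniformly convergent analytic continuation can exist on any neighborhood of $0$. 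If the first derivative turns out to stay bounded, the same scheme applies verbatim to the first $k$ for which the normalized crossing count $a_{n,k}/a_{n,0}$ fails to be $O(n^k)$, and the combinatorial estimates for up to eleven crossings should be more than enough to locate such a $k$.
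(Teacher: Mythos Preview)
Your overall strategy is the same as the paper's: assume analytic extensions $\tilde\lambda_n$ converge uniformly on compacta near $0$, deduce via Cauchy estimates (the paper routes this through Osgood's theorem) that every Taylor coefficient $\tilde\lambda_n^{(k)}(0)$ must stay bounded in $n$, and then exhibit a $k$ for which it does not. So the architecture is right.

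The gap is in where you expect the blow-up. Your first guess, $k=1$, fails outright: by Proposition~\ref{propnocross} one has $r_{n,1}=r_{n,2}=0$ for all $n$, so $\tilde\lambda_n'(0)=\tilde\lambda_n''(0)=0$. More seriously, your fallback criterion --- look for the first $k$ with $a_{n,k}/a_{n,0}$ not $O(n^k)$, or with the ``linear'' term $\tfrac{1}{2n}\,p_{n,k}/p_{n,0}$ unbounded --- is not what governs the answer. Already $p_{n,6}/p_{n,0}\sim c\,n^2$ (from the $(n+6)\binom{2n}{n-6}$ term in Lemma~\ref{expressionsLem}), so the linear piece of the sixth coefficient grows like $n$; yet $a_6^{(n)}$ \emph{converges}, because the quadratic Fa\`a~di~Bruno term $\binom{1/(2n)}{2}(p_{n,3}/p_{n,0})^2$ cancels it to leading order. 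These cancellations persist through $k=10$. The paper shows (Lemma~\ref{lemma11}) that $a_k^{(n)}$ converges for $k\le 10$ and that $a_{11}^{(n)}\sim -5n$; the combinatorial reason the cancellation finally breaks at $k=11$ is explained in Remark~\ref{remark11} and is not visible from the single ratio $p_{n,k}/p_{n,0}$. So to complete your argument you really do need the full Fa\`a~di~Bruno expansion (Lemma~\ref{PS}) together with the explicit formulas of Lemma~\ref{expressionsLem}, and you must push the computation all the way to $k=11$.
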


One concrete consequence of Theorem~\ref{thmbad} is that if $||c_q||$ (as a function in $q$) can be analytically extended onto some complex neighborhood of $(-1,1)$, and thus cast within the rich framework of holomorphic functions, the extension is difficult to achieve via the $2n$ norms. For instance, if $||c_q||$ can be represented by a power series that converges on some neighborhood of the origin, the coefficients of the power series \emph{cannot} be computed as limits of the coefficients in the power series representation of $\lambda_n$. (Specifically, Lemma~\ref{lemma11} will show that some of these coefficients diverge as $n\to\infty$.)
This is surprising, and also a little unfortunate, as the limiting procedure of (\ref{convergenceEq}) otherwise shows great promise in endowing $c_q$ with natural structural insight.

The remainder of this paper is organized as follows. Section~\ref{background} surveys some elements of non-commutative probability and contrasts the definitions and basic properties of the $q$-semicircular and $q$-circular operators. It overviews the combinatorial framework of Touchard that is concerned with crossings in pairings on an ordered set and, realizing that the same combinatorial structure is associated with the norm of the $q$-semicircular element, derives Theorem~\ref{thmgood}. Considering next the norm of the $q$-circular operator, Section~\ref{background} also introduces the combinatorial framework of crossings in parity-reversing pairings and compares it to the related problem of Corteel. Section~\ref{combinatorics} then develops combinatorial properties of crossings on parity-reversing pairings, which are subsequently used in Section~\ref{derivative} to derive the above Theorem~\ref{thmbad}.

\section{Combinatorial structure of $q$-circular and $q$-semicircular operators}
\label{background}

Considering the elements $c_q$ and $s_q$ as elements of a $C^\ast$ \emph{probability space} allows us to equivalently define the two operators via their \emph{moments}. In turn, the corresponding moments admit a clear combinatorial interpretation and, as such, should constitute a starting point of any exploration of the combinatorial underpinnings of $q$-commutative probability. At this point, several definitions are in order.

\begin{defn} A $\ast$-probability space $(\mathcal A, \phi)$ consists of 
\begin{itemize}
\item a unital algebra $\mathcal A$ equipped with an anti-linear $\ast$-operation $a\mapsto a^\ast$ such that $(a^\ast)^\ast=a$ and $(ab)^\ast=b^\ast a^\ast$ for all $a,b\in \mathcal A$;
\item a unital linear functional $\phi:\mathcal A\to \C,\quad \phi(1_\mathcal A)=1$ with the property that for any element $a\in \mathcal A$,
$$\phi(a^\ast a)\geq 0\quad\quad\quad\text{and}\quad\quad\quad\phi(a^\ast a)=0\implies a=0.$$
\end{itemize}

An element $a\in\mathcal A$ is considered to be a \emph{non-commutative random variable}. If $\mathcal A$ is additionally a $C^\ast$ algebra, then $(\mathcal A, \phi)$ is a \emph{$C^\ast$-probability space}.\label{Cast}
\end{defn}

\begin{defn} Given a $\ast$-probability space $(\mathcal A, \phi)$ and a non-commutative random variable $a\in\mathcal A$, the \emph{$\ast$-moments} (or, simply, \emph{moments}) of $a$ refer to complex numbers of the form
$$\phi(a^{\varepsilon(1)}\ldots a^{\varepsilon(n)})$$
where $n\in\mathbb N$ and $\varepsilon(1),\ldots,\varepsilon(n)\in\{1,\ast\}.$\label{defMoments}
\end{defn}

\begin{prop} (e.g.~\cite{NicaSpeicher}) Given a $C^\ast$-probability space $(\mathcal A, \phi)$, for every $a\in \mathcal A$ the norm of $a$ is given by $$||a||=\lim_{n\to\infty}(\phi(a^*a)^n)^{1/2n}.$$ \label{thmNorm}
\end{prop}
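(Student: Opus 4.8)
\emph{Proof proposal.} The plan is to reduce the statement to a fact about a single positive element and then invoke the Gelfand representation of the commutative $C^\ast$-subalgebra it generates. First observe that by Definition~\ref{Cast} the functional $\phi$ is a faithful state: it is positive, unital, and $\phi(x^\ast x)=0$ forces $x=0$. Since $\|a\|^2=\|a^\ast a\|$ in any $C^\ast$-algebra, and since $\phi((a^\ast a)^n)^{1/(2n)}=\big(\phi(b^n)^{1/n}\big)^{1/2}$ with $b:=a^\ast a\ge 0$, it suffices to prove that $\lim_{n\to\infty}\phi(b^n)^{1/n}=\|b\|$ for every positive $b\in\mathcal A$; the claimed formula then follows by taking square roots. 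Note also that $\|b\|=r(b)=\max\sigma(b)$ since $b$ is self-adjoint, where $r$ denotes the spectral radius.

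Next I would pass to the unital commutative $C^\ast$-subalgebra $C^\ast(1,b)\subseteq\mathcal A$ generated by $b$. By the Gelfand--Naimark theorem this algebra is isometrically $\ast$-isomorphic to $C(\sigma(b))$, with $b$ corresponding to the identity function $t\mapsto t$ on $\sigma(b)\subseteq[0,\|b\|]$. Restricting $\phi$ to this subalgebra gives a state on $C(\sigma(b))$, so by the Riesz representation theorem there is a Borel probability measure $\mu$ on $\sigma(b)$ with $\phi(f(b))=\int f\,d\mu$ for all $f\in C(\sigma(b))$; in particular $\phi(b^n)=\int t^n\,d\mu(t)$. The upper bound is immediate: since $\phi$ is a state, $\|\phi\|=\phi(1)=1$, hence $\phi(b^n)\le\|b^n\|=\|b\|^n$ and therefore $\limsup_n\phi(b^n)^{1/n}\le\|b\|$.

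For the matching lower bound I would use faithfulness. If $\operatorname{supp}\mu$ were a proper closed subset of $\sigma(b)$, one could choose a nonzero continuous $g\ge 0$ on $\sigma(b)$ vanishing on $\operatorname{supp}\mu$; then $g(b)\ne 0$ (by the Gelfand isomorphism) while $\phi(g(b)^\ast g(b))=\int g^2\,d\mu=0$, contradicting faithfulness. Hence $\operatorname{supp}\mu=\sigma(b)$, and in particular $\|b\|=\max\sigma(b)\in\operatorname{supp}\mu$, so $c_\varepsilon:=\mu\big([\|b\|-\varepsilon,\|b\|]\big)>0$ for every $\varepsilon>0$. Then $\phi(b^n)=\int t^n\,d\mu(t)\ge(\|b\|-\varepsilon)^n c_\varepsilon$, giving $\liminf_n\phi(b^n)^{1/n}\ge(\|b\|-\varepsilon)\lim_n c_\varepsilon^{1/n}=\|b\|-\varepsilon$; letting $\varepsilon\downarrow 0$ yields $\liminf_n\phi(b^n)^{1/n}\ge\|b\|$. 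Combining the two bounds proves $\lim_n\phi(b^n)^{1/n}=\|b\|$, hence the proposition. The only nontrivial step is the lower bound, where the hypothesis that $\phi$ is faithful is essential — without it $\mu$ could miss a neighborhood of the top of the spectrum and the limit would undershoot $\|b\|$; everything else is a routine application of the continuous functional calculus and elementary measure theory.
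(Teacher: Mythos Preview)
Your proof is correct and is essentially the standard argument. Note, however, that the paper does not supply its own proof of this proposition at all: it is stated with a citation to \cite{NicaSpeicher} and used as a black box. The argument you give---reducing to a positive element, passing to the commutative $C^\ast$-algebra it generates, representing the restricted state as a probability measure via Riesz, and using faithfulness to force the support of that measure to be all of the spectrum---is exactly the textbook proof one finds in the cited reference, so there is nothing to compare against within the paper itself.
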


The transition from the realm of non-commutative probability theory to that of combinatorics is in the present case accomplished by connecting Defnition~\ref{defMoments} and Proposition~\ref{thmNorm} to the following combinatorial construct.

\begin{defn}
A \emph{pairing} $\pi=\{b_1,\ldots,b_p\}$ on $[n]:=\{1,\ldots,n\}$ is a partition of $[n]$ into blocks of size 2, where $p=n/2$ for $n$ even and $\pi=\emptyset$ otherwise. 

Given a pairing $\pi$ on $[n]$, two blocks $\{a_i,b_i\}$ and $\{a_j,b_j\}$ in $\pi$ are said to \emph{cross} if either $a_i < a_j < b_i < b_j$ or $a_j < a_i < b_j < b_i$. The number of crossings of $\pi$, denoted $\text{cr}(\pi)$, is given by
$$\text{cr}(\pi)=|\{(i,j)\mid 1\leq i<j\leq p, b_i \text{ and } b_j \text{ cross }\}|.$$
\label{defcross}
\end{defn}

A convenient diagramatic representation of pairings and their crossings will be introduced shortly. 
Meanwhile, making use of the previously-defined combinatorial notions of pairings and crossings, the $q$-semicircular and $q$-circular elements can be equivalently defined as follows, cf. \cite{Bozejko1991,Mingo2001}.

\begin{defn}
Let $(\A,\varphi)$ be a $C^\ast$-probability space, and let $q\in(-1,1)$. 

The element $s\in\A$ is said to be a $q$-semicircular element of $\A$ if for every $n\geq 1$, we have $\phi(s^n)=0$ for $n$ odd and
$$\phi(s^{2n})=\sum_{\pi\in\mathcal P(2n)}q^{\text{cr}(\pi)},$$
where $\mathcal P(2n)$ denotes the set of all pairings $\pi=\{\{a_1,b_1\},\ldots,\{a_p,b_p\}\}$ on $[2n]=\{1,\ldots,2n\}$. \cite{Bozejko1991}

The element $c\in\A$ is said to be a $q$-circular element of $\A$ if for every $n\geq 1$ and all  $\varepsilon:[n]\to\{1,\ast\}^n$, we have $\phi(c^{\varepsilon(1)}\ldots c^{\varepsilon(n)})=0$ for $n$ odd, and
$$\phi(c^{\varepsilon(1)}\ldots c^{\varepsilon(2n)})=\sum_{\pi\in\mathcal {P}_{\varepsilon}(2n)}q^{\text{cr}(\pi)} ,$$
where $\mathcal {P}_{\varepsilon}(2n)$ denotes the set of all pairings $\pi=\{\{a_1,b_1\},\ldots,\{a_p,b_p\}\}$ on $[2n]$ with the property that $\varepsilon(a_i)\neq\varepsilon(b_i)$ for all $1\leq i\leq p$. \cite{Mingo2001}
\label{def2}
\end{defn}

In the light of Definition~\ref{def2} and Proposition~\ref{thmNorm}, the norms of the $q$-semicircular and $q$-circular elements of $\A$ are realized as
\begin{equation}||s_q||=\lim_{n\to\infty}\left(\phi((s^*s)^n)\right)^{\frac{1}{2n}}=\lim_{n\to\infty}\left(\sum_{\pi\in\mathcal P(2n)}q^{\text{cr}(\pi)}\right)^{\frac{1}{2n}},\label{norms}\end{equation}
and
\begin{equation}||c_q||=\lim_{n\to\infty}\left(\phi((c^*c)^n)\right)^{\frac{1}{2n}}=\lim_{n\to\infty}\left(\sum_{\pi\in\mathcal{P}_{\varepsilon}(2n)}q^{\text{cr}(\pi)}\right)^{\frac{1}{2n}},\label{normc}\end{equation}
for $\varepsilon:[n]\to\{1,\ast\}^n$ given by
$$\varepsilon(n)=\left\{\begin{array}{ll} 1&n\text{ even}\\\ast&n\text{ odd}\end{array}\right.$$
Since the semi-circular element is self-adjoint, $\phi((s^*s)^n)=\phi(s^{2n})$ in (\ref{norms}) and the sum is indexed over all pairings on $[2n]$. However, $c$ is not self-adjoint, and the key observation at this point is that ${P}_{\varepsilon}(2n)$, which indexes the sum in (\ref{normc}), is the set of pairings on $[2n]$ whose every pair contains an even and an odd integer, thus motivating the following definition.

\begin{defn}
A \emph{parity-reversing pairing} on $[2n]$ is a pairing $\pi=\{\{a_1,b_1\},\ldots,\{a_n,b_n\}\}$ on $[2n]$ with the property that for all $i=1\ldots,n$, $\{a_i,b_i\}$ contains one even and one odd element. Let $\mathcal R_{n}$ denote the collection of parity-reversing pairings on $[2n]$, and let $\mathcal R_{n,k}$ denote the subset of those pairings containing exactly $k$ crossings (cf. Definition~\ref{defcross}).
\label{defpairings}
\end{defn}

\subsection{Combinatorics of the $q$-semicircular operator}
In order to formulate the expression (\ref{norms}) combinatorially, let $t_{n,k}$ count the number of pairings $\pi$ on  $[2n]$ admitting exactly $k$ crossings and consider the corresponding generating function $T_n:\C \to \C$ given by $T_{n}(q)=\sum_{k}t_{n,k}q^k$. The norm of the $q$-semicircular element is then realized as the following limit:
\begin{equation}||s_q||=\lim_{n\to\infty} (T_n(q))^\frac{1}{2n}.\label{normTouchard}\end{equation}
Note that the polynomials $T_{n}(q)$ are the moments of the $q$-Hermite polynomials of \cite{Ismail1987} arising from the Askey-Wilson integration scheme. This should not come as a surprise, as the $q$-Hermite polynomials form the orthogonal polynomial sequence associated with the $q$-semicircular operator \cite{Bozejko1997,Kemp2005}. 

According to Riordan \cite{Riordan1975}, the problem of counting the non-crossing pairings on $[2n]$ was first posed and solved by Alfred Errera \cite{Errera1931}. The result is given by $C_n={{2n}\choose n}/(n+1)$ and is one of the many occurrences of Catalan numbers in enumerative combinatorics. 
To count the total number of pairings (summed over all crossings), the reader may easily verify that the set of all pairings on $[2n]$ is in bijective correspondence with the fixed-point-free involutions on $[2n]$. It follows that there is a total of $(2n-1)(2n-3)\ldots 5\cdot 3\cdot 1=(2n)!!$ pairings on $[2n]$, exactly $n!$ of which are are parity-reversing.

To deal with the general number of crossings in pairings, it is beneficial to diagramatically representing the sequence $t_{n,k}$ via  \emph{chord-crossing diagrams}, introduced by Touchard \cite{Touchard1952,Touchard1950,Touchard1950-2} in the context of the postage stamp problem. Starting with a pairing $\pi\in \mathcal P(2n)$, the corresponding chord-crossing diagram is obtained by representing the set $[2n]$ as points on a circle and the pairs in $\pi$ as connecting disjoint chords (i.e. with no two cords sharing an endpoint). The notion of a pairing therby gains structural insight by which Definition~\ref{defcross} of a crossing becomes natural. Three pairings on $\{1,\ldots,10\}$ with five, three, and zero crossings, respectively, are represented as chord-crossing diagrams in Figure~\ref{chordsFig}.

\begin{figure}[tpb]\includegraphics[scale=0.5]{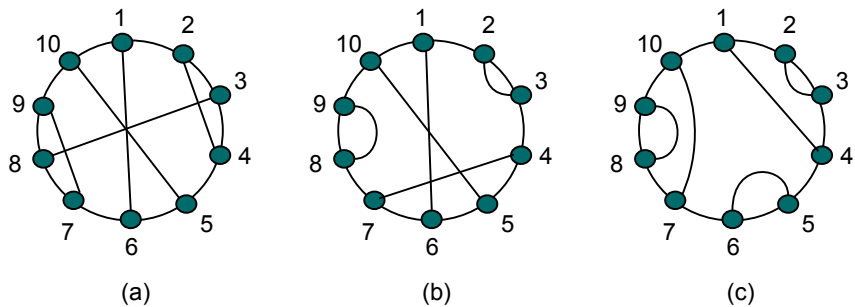}\caption{Chord-crossing diagrams representing the pairings: \mbox{(a) $\{\{1,6\},\{2,4\},\{3,8\},\{5,10\},\{7,9\}\}$,} (b) $\{\{1,6\},\{2,3\},\{4,7\},\{5,10\},\{8,9\}\}$, {(c) $\{\{1,4\},\{2,3\},\{5,6\},\{7,10\},\{8,10\}\}$.}}\label{chordsFig}\end{figure}

The chord-crossing diagrams were studied in much depth by Touchard, Riordan, and others \cite{Touchard1952,Touchard1950,Touchard1950-2,Riordan1975,Penaud1995,Flajolet2000}. A general expression for the number of pairings with $k$ crossings, where $k$ takes values in $\{1,\ldots,{n\choose 2}\}$ was implicit in the work of Touchard \cite{Touchard1952,Touchard1950,Touchard1950-2} and distilled by Riordan \cite{Riordan1975}.
The formula of Touchard and Riordan for counting the number of chord-crossing diagrams with a fixed number of crossings is as follows. The formula was originally derived via generating functions, with a bijective proof later provided by Penaud \cite{Penaud1995}.
\begin{theorem}\cite{Riordan1975,Touchard1952} For $n\in\N$,
\begin{equation}T_{n}(q)=\frac{1}{(1-q)^n}\sum_{k=-n}^n(-1)^kq^{k(k-1)/2}{{2n}\choose{n+k}}.\label{Touchard}\end{equation}\label{thmTouchard}
\end{theorem}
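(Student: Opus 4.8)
The plan is to pass through the classical correspondence between pairings‑with‑crossings and weighted Dyck paths, convert $T_n$ into a Jacobi continued fraction, evaluate that fraction in closed form, and then read off the coefficient. \emph{Step 1: reduction to a continued fraction.} Reading the points $1,2,\dots,2n$ from left to right, record an up‑step each time a point opens a block of $\pi$ and a down‑step each time a point closes one; this identifies $\mathcal P(2n)$ with the set of Dyck paths of semilength $n$ in which every down‑step descending from height $h$ additionally carries a label in $\{1,\dots,h\}$ recording which of the $h$ currently open blocks is closed. If that label is $r$, then exactly $h-r$ of the crossings of $\pi$ are created at this step, and each crossing of $\pi$ is created exactly once in the course of the reading; hence summing $q^{\text{cr}(\pi)}$ over the $h$ labels available at a down‑step from height $h$ produces the factor $[h]_q:=1+q+\cdots+q^{h-1}$, so that
$$T_n(q)=\sum_{D}\ \prod_{e}[h_e]_q,$$
where $D$ ranges over Dyck paths of semilength $n$, $e$ over the down‑steps of $D$, and $h_e$ is the height from which $e$ descends. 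By the standard correspondence between such weighted‑path moment sequences and Jacobi continued fractions, the generating function $F(u):=\sum_{n\ge0}T_n(q)u^n$ equals
$$F(u)=\cfrac{1}{1-\cfrac{[1]_q\,u}{1-\cfrac{[2]_q\,u}{1-\cfrac{[3]_q\,u}{1-\cdots}}}}.$$

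\emph{Step 2: evaluating the continued fraction.} Because $[h]_q=(1-q^h)/(1-q)$, the substitution $G(v):=F\big((1-q)v\big)$ replaces each coefficient $[h]_q$ by $1-q^h$ and gives $T_n(q)=(1-q)^{-n}\,[v^n]G(v)$, which already accounts for the prefactor in (\ref{Touchard}). The fraction defining $G$ is of Rogers--Ramanujan type: denoting its $m$‑th convergent by $P_m/Q_m$, the denominators satisfy a three‑term linear $q$‑recurrence (schematically $Q_m=Q_{m-1}-(1-q^{m})v\,Q_{m-2}$) which is solved explicitly by a terminating basic hypergeometric sum, and letting $m\to\infty$ presents $G(v)$ as a ratio of two convergent $q$‑series that Euler's $q$‑exponential identities collapse to a single series. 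Equivalently --- and this is the content of the remark connecting $T_n$ to the $q$‑Hermite polynomials and the Askey--Wilson integration scheme --- the numbers $(1-q)^nT_n(q)$ are the even moments of the continuous $q$‑Hermite orthogonality measure, so they can be read off from the Fourier expansion $H_m(\cos\theta\mid q)=\sum_{j=0}^m\binom{m}{j}_q e^{i(m-2j)\theta}$ together with the trigonometric moments of the $q$‑Hermite weight, the latter supplied by the Jacobi triple product. I expect this to be the main obstacle: it is precisely the ``moderately strenuous'' $q$‑series manipulation carried out by Touchard and distilled by Riordan, and it is the step that resists a short elementary argument.

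\emph{Step 3: coefficient extraction, and an alternative.} Either route in Step 2 delivers $G(v)=\sum_{k\in\Z}(-1)^kq^{k(k-1)/2}g_k(v)$ with $g_k(v)=\sum_{n\ge|k|}\binom{2n}{n+k}v^n$; using the standard closed form $g_k(v)=v^{|k|}C(v)^{2|k|}/\sqrt{1-4v}$ in terms of the Catalan generating function $C(v)=(1-\sqrt{1-4v})/(2v)$ --- a one‑line consequence of the reflection principle, needed here only to certify that $g_k$ really is the generating function of $\binom{2n}{n+k}$ --- and extracting $[v^n]$ yields $[v^n]G(v)=\sum_{k=-n}^n(-1)^kq^{k(k-1)/2}\binom{2n}{n+k}$, which with $T_n(q)=(1-q)^{-n}[v^n]G(v)$ is exactly (\ref{Touchard}). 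As a more self‑contained route avoiding Step 2 one can instead expand directly from Step 1: $(1-q)^nT_n(q)=\sum_D\prod_e(1-q^{h_e})=\sum_D\sum_S(-1)^{|S|}q^{\sum_{e\in S}h_e}$, the inner sum over subsets $S$ of the down‑steps of $D$, and then look for a sign‑reversing involution on these marked Dyck paths whose surviving configurations are enumerated, with the correct weight, by the displayed alternating binomial sum. I would expect such an involution to exist --- this is morally the reflection‑principle computation once more, in the spirit of Penaud's bijective proof --- but to require roughly as much care as the generating‑function calculation itself.
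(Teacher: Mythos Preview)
The paper does not prove this statement at all: Theorem~\ref{thmTouchard} is quoted from the literature, attributed to Touchard and Riordan, with the remark that ``the formula was originally derived via generating functions, with a bijective proof later provided by Penaud.'' There is therefore no proof in the paper to compare your attempt against.

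As for the sketch itself, Step~1 is correct and standard, and the substitution at the start of Step~2 correctly isolates the prefactor $(1-q)^{-n}$. But the actual content of the theorem is precisely the evaluation you defer in Step~2: you name two possible routes (solving the $q$-recurrence for the continued-fraction denominators, or computing the $q$-Hermite moments via the triple product), acknowledge that this is ``the main obstacle'' and ``resists a short elementary argument,'' and then in Step~3 simply assert that ``either route in Step~2 delivers $G(v)=\sum_{k\in\Z}(-1)^kq^{k(k-1)/2}g_k(v)$'' without having carried either route out. That assertion \emph{is} the theorem, up to the trivial coefficient extraction that follows. Likewise, your alternative route --- expanding $(1-q)^nT_n(q)=\sum_D\sum_S(-1)^{|S|}q^{\sum_{e\in S}h_e}$ and seeking a sign-reversing involution --- is a valid reformulation, but you only conjecture that the involution exists and note it would take ``roughly as much care as the generating-function calculation itself.'' So what you have written is an accurate roadmap of the known proofs (continued-fraction/$q$-Hermite on one side, Penaud-style bijection on the other), not a proof: the decisive step is identified but not executed in either branch.
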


Combining (\ref{norms}) and (\ref{Touchard}), the norm of the $q$-semicircular operator is expressed via the $\{T_n(q)\}_{n\in\N}$ sequence of generating functions
$$||s_q||=\lim_{n\to\infty} (T_n(q))^\frac{1}{2n}.$$
Evaluating this limit recovers the familiar expression (\ref{sproperties}) and yields the following proof of Theorem~\ref{thmgood}.

{\it Proof of  Theorem~\ref{thmgood}.}
Consider the sequence $\tilde\gamma_n$ of complex-valued functions  on the complex unit ball $B_{\mathbb C}=\{z\in \mathbb C; |z|<1\}$ given by
$$\tilde\gamma_n(q)=\left|\frac{1}{(1-q)^n}\sum_{k=-n}^n(-1)^kq^{k(k-1)/2}{{2n}\choose{n+k}}\right|\,e^{i\phi_{n}(q)},$$
where $\phi_{n}(q)\in [-\pi,\pi)$ is the phase of $(1-q)^{-n}\sum_{k=-n}^n(-1)^kq^{k(k-1)/2}{{2n}\choose{n+k}}$. 
We will first show that $\tilde\gamma_n(q)^\frac{1}{2n}\to 2/\sqrt{1-q}$ for all $q\in B_{\mathbb C}$ and subsequently argue that $\tilde\gamma_n$ is analytic on $B_{\mathbb C}$ for all $n\in\mathbb N$ and that the convergence is uniform on compact subsets of $B_{\mathbb C}$.

Given any $q\in B_{\mathbb C}$, write
\begin{eqnarray*}
\sum_{k=-n}^n(-1)^kq^{k(k-1)/2}{{2n}\choose{n+k}}&=&(-1)^{n+1}\sum_{k=1}^{2n+1}(-1)^kq^{(k-n-1)(k-n-2)/2}{{2n}\choose{k-1}}.
\end{eqnarray*}
Recalling that $|q|<1$ and noting that ${{2n}\choose{n+k}}$ is maximized at $k=0$ yields the crude upper bound
\begin{equation}\left|\sum_{k=-n}^n(-1)^kq^{k(k-1)/2}{{2n}\choose{n+k}}\right|\leq (2n+1){2n\choose n}\label{ineq1a}\end{equation}
A lower bound requires a more careful handling of the alternating sum. For this, note that the magnitude of $q^{(k-n-1)(k-n-2)/2}$ is maximized at $k\in\{n+1,n+2\}$, in which case $q^{(n+1-n-1)(n+1-n-2)/2}$\, $=1$.
Moreover, as ${{2n}\choose{k-1}}$ increases with $k$ on $\{1,\ldots,n+1\}$ and decreases on $\{n+2,\ldots,2n+1\}$, one has that $q^{(k-n-1)(k-n-2)/2}{{2n}\choose{k-1}}$ is increasing in $k$ for  $k=1,\ldots,n+1$ and decreasing in $k$ for $k=n+2,\ldots,2n+1$. Considering first the terms corresponding to $k=1,\ldots,n+1$, it follows that
\begin{equation}\left|\sum_{k=1}^{n+1}(-1)^kq^{(k-n-1)(k-n-2)/2}{{2n}\choose{k-1}}\right | \geq {{2n}\choose{n}}-\sum_{k=1}^{n}|q|^{(k-n-1)(k-n-2)/2}{{2n}\choose{k-1}}.\label{ineq1aa}\end{equation}
But,
$$\sum_{k=1}^{n}|q|^{(k-n-1)(k-n-2)/2}{{2n}\choose{k-1}}\leq \sum_{k=1}^{n}|q|^k{{2n}\choose{k-1}}\leq \sum_{k=1}^{n}|q|^k{{2n}\choose{k}}\leq  (1+|q|)^{2n}.$$
Thus,
\begin{equation}\left|\sum_{k=1}^{2n+1}(-1)^kq^{(k-n-1)(k-n-2)/2}{{2n}\choose{k-1}}\right|\geq {{2n}\choose{n}}-(1+|q|)^{2n}.\label{ineq1b}\end{equation}
Next, for $k=n+2,\ldots,2n+1$,
$$\left|\sum_{k=n+2}^{2n+1}(-1)^kq^{(k-n-1)(k-n-2)/2}{{2n}\choose{k-1}}\right |\leq {{2n}\choose{n+1}}+\sum_{k=n+3}^{2n+1}|q|^{(k-n-1)(k-n-2)/2}{{2n}\choose{k-1}},$$
and
$$\sum_{k=n+3}^{2n+1}|q|^{(k-n-1)(k-n-2)/2}{{2n}\choose{k-1}}\leq \sum_{k=n+3}^{2n+1} |q|^{k}{{2n}\choose{k-1}} \leq (1+|q|)^{2n+1}.$$ Thus,
\begin{equation}\left|\sum_{k=n+2}^{2n+1}(-1)^kq^{(k-n-1)(k-n-2)/2}{{2n}\choose{k-1}}\right|\leq {{2n}\choose{n+1}}+(1+|q|)^{2n+1}.\label{ineq1c}\end{equation}
Letting $a_n(q):=\sum_{k=1}^{n+1}(-1)^kq^{(k-n-1)(k-n-2)/2}{{2n}\choose{k-1}}$ and $b_n(q):=\sum_{k=n+2}^{2n+1}(-1)^kq^{(k-n-1)(k-n-2)/2}{{2n}\choose{k-1}}$, write
$$|a_n(q)|-\left|b_n(q)\right|\leq\left|\sum_{k=1}^{2n+1}(-1)^kq^{(k-n-1)(k-n-2)/2}{{2n}\choose{k-1}}\right|=|a_n(q)+b_n(q)|\leq \left|a_n(q)\right|+ \left|b_n(q)\right|.$$
Applying the inequalities (\ref{ineq1aa})-(\ref{ineq1c}) then yields,
\begin{equation}{{2n}\choose{n}}-{{2n}\choose{n+1}}-2(1+|q|)^{2n+1}\leq \left|\sum_{k=1}^{2n+1}(-1)^kq^{(k-n-1)(k-n-2)/2}{{2n}\choose{k-1}}\right|\leq (2n+1){{2n}\choose{n}}.\label{ineq1e}\end{equation}
Noticing that ${{2n}\choose{n}}-{{2n}\choose{n+1}}=\frac{1}{n+1}{2n\choose n}$, ${{2n}\choose n}\sim \frac{4^n}{n^{1/2}\sqrt{\pi}}$ and $2(1+|q|)^{2n+1}< 2^{2n+2}$, one immediately obtains that
$${{2n}\choose{n}}-{{2n}\choose{n+1}}-2(1+|q|)^{2n+1}\sim \frac{4^n}{n^{3/2}\sqrt{\pi}}$$ and
$${{2n}\choose{n}}(n+1)+{{2n}\choose{n+1}}+(1+|q|)^{2n+1}\sim n^{1/2} \frac{4^n}{\sqrt{\pi}}.$$
It follows that
\begin{equation}\left|\sum_{k=1}^{2n+1}(-1)^kq^{(k-n-1)(k-n-2)/2}{{2n}\choose{k-1}}\right|^{\frac{1}{2n}}\to 2\quad\quad\text{as }\,n\to\infty.\label{mainclaim}\end{equation}
Now recall that
$$\tilde\gamma_n(q)^\frac{1}{2n}=\left|\frac{1}{(1-q)^n}\sum_{k=1}^{2n+1}(-1)^kq^{(k-n-1)(k-n-2)/2}{{2n}\choose{k-1}}\right|^{\frac{1}{2n}}\,e^{i\frac{\phi_n(q)}{2n}},$$
where $\phi_n(q)\in [-\pi,\pi)$. Thus, by (\ref{mainclaim}) and the fact that  $e^{i\frac{\phi_n(q)}{2n}}\to 1$, we finally obtain that
\begin{equation}\lim_{n\to\infty}\tilde\gamma_n(q)^{\frac{1}{2n}}=\frac{2}{\sqrt{1-q}}\label{eqconvergence1}\end{equation} for all $q\in B_{\mathbb C}$. 

At this point, it is easy to see that for all $n\in\mathbb N$, $\tilde \gamma_n$ is analytic on $B_{\mathbb C}$. Specifically, $(1-q)^{-\frac{1}{2}}$ as a function in $q$ is analytic on $B_{\mathbb C}$ and $\sum_{k=-n}^n(-1)^kq^{k(k-1)/2}{{2n}\choose{n+k}}$ is a polynomial which, by the lower bound in (\ref{ineq1e}), is nowhere vanishing on $B_{\mathbb C}$.

Finally, to show that the convergence in (\ref{eqconvergence1}) is uniform on compact subsets of the unit ball $B_{\mathbb C}$, fix $\epsilon>0$ and let $K$ denote some such subset. Let $t_n(q):=\sum_{k=-n}^n(-1)^kq^{k(k-1)/2}{{2n}\choose{n+k}}$ and note that it suffices to show that $t_n^{1/(2n)}\to 2$ uniformly on $K$. 
Fix $\epsilon>0$ and write
$$\bigg|2-t_n(q)^{1/(2n)}\bigg|\leq \bigg|2-|t_n(q)|^{1/(2n)}\,\bigg|+ \bigg|t_n(q)^{1/(2n)}-|t_n(q)^{1/(2n)}|\,\bigg|\,.$$
By (\ref{ineq1e}),
$$\bigg|2-|t_n(q)|^{1/(2n)}\bigg|\leq\max\left\{\left|2-\left((2n+1){{2n}\choose{n}}\right)^{1/(2n)}\right|,\sup_{q\in K}\left|2-\left(\frac{1}{n+1}{{2n}\choose{n}}-2(1+|q|)^{2n+1}\right)^{1/(2n)}\right|\right\}.$$
Since $K$ is compact, the above supremum is achieved at some $q_*$, where $|q_\ast|<1$. As previously shown,
$$\left((2n+1){{2n}\choose{n}}\right)^{1/(2n)}\to 2 \quad\quad\text{and}\quad\quad\left(\frac{1}{n+1}{{2n}\choose{n}}-2(1+|q_*|)^{2n+1}\right)^{1/(2n)}\to 2.$$
Since neither of the above limits depends on $q$, it follows that there exists some integer $m\in\mathbb N$ so that $|2-|t_n(q)|^{1/(2n)}|\leq \epsilon$ for all $n\geq m$ and all $q\in K$.
Additionally, for all $n\geq m$ and all $q\in K$,
$$\bigg|t_n(q)^{1/(2n)}-|t_n(q)^{1/(2n)}|\,\bigg|\leq(2+\epsilon)\,\bigg|e^{\frac{\angle{t_n(q)}}{2n}}-1\bigg|\leq (2+\epsilon)\,\bigg|e^{2\pi/(2n)}-1\bigg|\leq (2+\epsilon)\frac{2\pi}{n},$$
where $|e^{2\pi/(2n)}-1|\leq 2\pi/n$ for all $n$ large enough (where the threshold does not depend on $q$).
Thus, for any $\epsilon>0$, there exists some $m\in\mathbb N$ so that
$$\bigg|2-\left(\sum_{k=-n}^n(-1)^kq^{k(k-1)/2}{{2n}\choose{n+k}}\right)^\frac{1}{2n}\bigg|=\bigg|2-|t_n(q)|^{1/(2n)}\bigg|<\epsilon$$
for all $n>m$ and $q\in K$. In other words, $\left(\sum_{k=-n}^n(-1)^kq^{k(k-1)/2}{{2n}\choose{n+k}}\right)^\frac{1}{2n}\to 2$ uniformly on $K$ and the result follows.
$\hfill\qed$

\subsection{Combinatorics of the $q$-circular operator}
In order to formulate the expression (\ref{normc}) for the norm of the $q$-circular operator combinatorially, let $r_{n,k}=|\mathcal R_{n,k}|$ count the number of parity-reversing pairings $\pi$ on $[2n]$ admitting exactly $k$ crossings and let $R_{n}(q)=\sum_{k}r_{n,k}q^k$ denote the corresponding generating function. Then,
\begin{equation}||c_q||=\lim_{n\to\infty} (R_n(q))^\frac{1}{2n}.\label{normPRP}\end{equation}

The pairings in $\mathcal R_n$ are likewise realized as a chord-crossing diagrams admitting only those chords which connect even to odd integers. Revisiting Figure~\ref{chordsFig}, note that the pairings in (b) and (c) are parity-reversing, whereas the one in (a) is not. However, in a manner that more closely ties to the original motivation (\ref{normc}), it is worthwhile modifying the chord-crossing diagram when representing  parity-reversing pairings. Specifically, the parity-reversing pairings will from now on be represented directly as $\circ\leftrightarrow\ast$ matchings on $(\circ,\ast)^n$, with two types of elements ($\ast $ and $\circ$) labeled $1$ through $n$ and with chords connecting elements of opposite types. For convenience, let us adopt the convention that $\circ_i$ connects to $\ast_{\sigma(j)}$, where $\sigma$ is the unique permutation corresponding to the given parity-reversing pairing determined using the procedure outlined in the proof of Proposition~\ref{propPermutations}. Figure~\ref{figurePRP} diagramatically represents a parity-reversing pairing on $\{1,\ldots,10\}$ as both a chord-crossing diagram and a $\circ\leftrightarrow \ast$ matching diagram on $(\circ,\ast)^n$. 

\begin{figure}[h]
\includegraphics[scale=0.5]{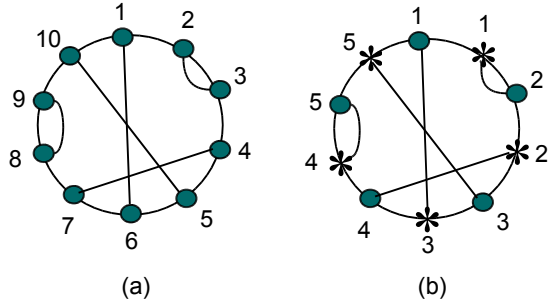}\centering
\caption{Two representations of a parity-reversing pairing on $\{1,\ldots,10\}$:  a) The chord crossing diagram, b) The $\circ\leftrightarrow\ast$ matching on $(\circ,\ast)^5$. Following our convention, the pairing $\{\{1,6\},\{2,3\},\{4,7\},\{5,10\},\{8,9\}\}$ is equivalent to the permutation $\sigma=31524$ and the matching is represented as $\circ_i\mapsto \ast_{\sigma(i)}$.}
\label{figurePRP}
\end{figure}

\begin{remark}
Representing a parity-reversing pairing on $[2n]$ as an $\ast\leftrightarrow \circ$ pairing on $(\ast\circ)^n$, Definition~\ref{defcross} can be reformulated as follows: pairs $(\ast_i,\circ_{\sigma(i)})$ and $(\ast_j,\circ_{\sigma(j)})$, cross if and only if exactly one element from the set $\{j,\sigma(j)\}$ belongs to the interval in $\mathbb N$ whose endpoints are $i$ and $\sigma(i)$.
\label{remarkcross}
\end{remark}

Representing parity-reversing pairings as $\circ\leftrightarrow\ast$ matchings on $(\circ,\ast)^n$ renders clear the following important fact.

\begin{prop} The set of all parity-reversing pairings on $[2n]$, $\mathcal R_n$, is in bijective correspondence with $S_n$, the group of permutations on $[n]$.\label{propPermutations}
\end{prop}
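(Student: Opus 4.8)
The plan is to write down an explicit bijection and check that it is well-defined in both directions. The starting point is that $[2n]$ is the disjoint union of its $n$ odd elements $O=\{1,3,\dots,2n-1\}$ and its $n$ even elements $E=\{2,4,\dots,2n\}$, and that the defining property of a parity-reversing pairing is precisely that each of its $n$ blocks meets $O$ in one point and $E$ in one point. I would fix once and for all the identifications $[n]\to O$, $i\mapsto 2i-1$, and $[n]\to E$, $j\mapsto 2j$.

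For the forward map $\Phi\colon\mathcal R_n\to S_n$: given $\pi\in\mathcal R_n$, each odd integer $2i-1$ lies in a unique block of $\pi$ (since $\pi$ partitions $[2n]$), and the other element of that block is even, say $2\sigma(i)$; set $\Phi(\pi)=\sigma$. I would then verify $\sigma\in S_n$: it is injective because distinct odd integers lie in distinct blocks and hence are matched to distinct even integers, and it is surjective because every even integer likewise lies in a (unique) block, whose other element is the odd integer sent to it. Equivalently, $\pi$ restricts to a bijection $O\to E$, and $\Phi(\pi)$ is the permutation of $[n]$ corresponding to that bijection under the two fixed identifications.

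For the inverse map $\Psi\colon S_n\to\mathcal R_n$ I would set $\Psi(\sigma)=\{\{2i-1,2\sigma(i)\}:i\in[n]\}$ and check that it lies in $\mathcal R_n$: the $n$ listed pairs are pairwise disjoint (their odd entries $2i-1$ are distinct, and their even entries $2\sigma(i)$ are distinct because $\sigma$ is injective), their union is all of $[2n]$ (the odd entries exhaust $O$ and, since $\sigma$ is surjective, the even entries exhaust $E$), so $\Psi(\sigma)$ is a pairing, and it is parity-reversing by construction. The identities $\Phi\circ\Psi=\mathrm{id}_{S_n}$ and $\Psi\circ\Phi=\mathrm{id}_{\mathcal R_n}$ are then immediate: $\{2i-1,2j\}$ is a block of $\Psi(\sigma)$ iff $j=\sigma(i)$, and every block of an arbitrary $\pi\in\mathcal R_n$ has the form $\{2i-1,2\Phi(\pi)(i)\}$ for a unique $i$. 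The procedure defining $\Phi$ is the one the surrounding discussion and Remark~\ref{remarkcross} refer to, and it makes $\circ_i\mapsto\ast_{\sigma(i)}$ correspond to pairing $2i-1$ with $2\sigma(i)$; as a byproduct the bijection reproves that $|\mathcal R_n|=n!$.

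I do not expect a genuine obstacle: the argument is entirely elementary, resting only on the fact that a pairing is a set partition and therefore restricts to bijections on the two parity classes. The one point to handle with care is to fix the identifications of $O$ and $E$ with $[n]$ at the outset and keep them fixed, so that $\Phi$ and $\Psi$ are literally mutually inverse rather than inverse only up to relabeling; recording which identification is used also matters, since the paper's later diagrammatic conventions depend on it.
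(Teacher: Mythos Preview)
Your proposal is correct and is essentially the same construction as the paper's proof: both define $\sigma(i)$ so that $2i-1$ is paired with $2\sigma(i)$, which the paper phrases as ordering the odd--even pairs by their odd coordinate and setting $\sigma(i)=e_i/2$. Your version is in fact more thorough, since you explicitly construct the inverse $\Psi$ and verify both compositions, whereas the paper simply notes that the map is manifestly bijective.
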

\begin{proof}
Let $\pi$ be a parity-reversing pairing and construct the corresponding permutation $\sigma_\pi$ as follows. Represent the blocks of $\pi$ as ordered pairs $(a_i,b_i)$ where $a_i$ is odd and $b_i$ even for $i=1,\ldots,n$. Order the pairs increasingly along the first coordinate. The result can be written as $(1,e_1),(2,e_2),\ldots,(n,e_n)$, where $\{e_1,\ldots,e_n\} = \{2,4,\ldots,2n\}$. The claim follows by letting $\sigma(i)=e_i/2$.
\end{proof}

A remarkable consequence of Proposition~\ref{propPermutations} is the fact that, as parity-reversing pairings can be essentially thought of as permutations, the associated concept of a crossing provides a natural (diagramatic) notion of \emph{permutation crossings} that is substantially different than the various other previously-proposed notions. Algorithmic applications have lead to a nearly-ubiquitous definition of permutation crossings as derangements of a permutation (e.g. \cite{Biedl2006}). However, the recent re-definition by Corteel \cite{Corteel2007}, shown to exhibit deep connections to other permutation statistics, statistical mechanics, and orthogonal polynomials, is in fact intimately related to the concept of crossings in parity-reversing pairings. 
The crossings introduced by Corteel, referred to as \emph{directed crossings} in the present context, are defined as follows. Consider a permutation $\sigma$ on $[n]$ and a pair $(i,\sigma(i))$. Consider the pair to have \emph{positive orientation} if $i\leq \sigma(i)$ and \emph{negative orientation} if $i>\sigma(i)$. Given a permutation $\sigma$ on $[n]$ and $i\in[n]$ such that $(i,\sigma(i))$ has positive orientation, let $C_+(i)$ denote the set of remaining positively oriented pairs that cross $(i,\sigma(i))$ and, for convenience, let $C_-(i)=\emptyset$. Analogously, given a pair $(j,\sigma(j))$ with negative orientation, let $C_-(i)$ denote the set of remaining negatively oriented pairs that cross $(j,\sigma(j))$ and let $C_+(j)=\emptyset$. More succintly, for a given permutation $\sigma$ on $[n]$ and $i\in[n]$,
$$C_+(i) = \{j\in[n] \mid j <i \leq \sigma(j)<\sigma(i)\},$$
$$C_-(i) = \{j\in[n] \mid j >i>\sigma(j)>\sigma(i)\}.$$
Let the \emph{directed crossing} (our terminology) refer to a crossing that is engendered by two pairs with the same orientation and write $\widetilde{\text{Cr}}(\sigma)$ for the number of directed crossings in a permutation $\sigma$. Then,
$$\widetilde{\text{Cr}}(\sigma):= \sum_{i=1}^n|C_+(i)|+\sum_{i=1}^n|C_-(i)|.$$
For example, revisiting the pairing of Figure~\ref{figurePRP} and recalling the convention $\circ_i\mapsto\ast_{\sigma(j)}$, the pairs $(i,\sigma(i))$ given by $(1,3)$ and $(3,5)$ have positive orientation, while the pairs $(2,1),(4,2),(5,4)$ have negative orientation.

Directed crossings form a subset of the crossings introduced in Definition~\ref{defcross}. For example, revisiting the pairing of Figure~\ref{chordsFig}-b), pairs $(1,6)$ and $(2,4)$ have a positive orientation and pairs $(2,1)$, $(4,2)$, $(5,4)$ have negative orientation. It follows that, though the pairing has a total of three crossings, the only directed crossing is that of $(1,3)$ with $(3,5)$.

The directed crossings in permutations turn out to be specializations of a much broader class of objects, namely the \emph{staircase tableaux} introduced by Corteel and Williams \cite{Corteel2010}. The staircase tableaux are a five-parameter class of labeled Young tableaux of shape $n,n-1,\ldots,1$, shown in  \cite{Corteel2010} to describe the combinatorial underpinnings of both the asymmetric exclusion process (ASEP) and the Askey-Wilson polynomials. In this framework, the generating function associated with the directed permutation crossings is given by the $\alpha=\beta=1$, $\gamma=\delta=0$, $y=1$ specialization of the partition function of the ASEP, previously computed in \cite{Williams2005}, yielding the following expression.

\begin{theorem}\cite{Williams2005,Corteel2007} For $n\in\mathbb N$,
$$\sum_{\sigma\in S_{n}}q^{\widetilde{\text{Cr}}(\sigma)}=\sum_{k=0}^{n}\sum_{i=0}^{k-1}(-1)^i[k-i]_q^{n}q^{k(i-k)}\left({{n}\choose i}q^{k-i}+{{n}\choose{i-1}}\right),$$
where $S_{n}$ denotes the set of all permutations on $[n]$ and $[\ell]_q=1+q+q^2+\ldots+q^{\ell-1}$ for any non-negative integer $\ell$. 
\label{thmCorteel}\end{theorem}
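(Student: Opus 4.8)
The plan is to pass from the permutation statistic $\widetilde{\text{Cr}}$ to a weighted lattice‑path model, read off a Jacobi continued fraction, and evaluate the resulting moment sequence; this is, in essence, the route of Corteel \cite{Corteel2007} and, on the Grassmannian side, of Williams \cite{Williams2005}. Since by Proposition~\ref{propPermutations} the sum runs over $S_n$, I would begin by fixing a Fran\c{c}on--Viennot--type bijection between $S_n$ and \emph{Laguerre histories} of length $n$: Motzkin paths $0=h_0,h_1,\ldots,h_n=0$ with up/level/down steps, each step decorated by an integer label ranging over an interval whose size is controlled by the current height. Such bijections arise by building a permutation through successive insertions of $1,2,\ldots,n$, the insertion of each new value into one of the currently available slots amounting to the choice of a step together with its label. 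The content of the construction is that $\widetilde{\text{Cr}}(\sigma)$ becomes a purely height‑based statistic of the decorated path: a step at height $h$ with label $\xi$ contributes a fixed affine function of $\xi$ and $h$ to $\widetilde{\text{Cr}}$, so that after summing $q^{\widetilde{\text{Cr}}}$ over the labels an up‑, level‑, or down‑step at height $h$ acquires the weight $[h+1]_q$, $[h]_q+[h+1]_q$, or $[h]_q$ respectively (these weights being pinned down by the values $\sum_{\sigma\in S_n}q^{\widetilde{\text{Cr}}(\sigma)}$ for $n=1,2,3$, which are $1$, $2$, and $5+q$). One must track which orientation each pair $(i,\sigma(i))$ receives and check that exactly the crossings between pairs of \emph{opposite} orientation — precisely those not counted by $\widetilde{\text{Cr}}$ — are the ones the bijection discards.

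Granting the weighted path model, Flajolet's continued‑fraction theorem \cite{Flajolet2000} then gives
$$\sum_{n\ge 0}\Bigl(\sum_{\sigma\in S_n}q^{\widetilde{\text{Cr}}(\sigma)}\Bigr)z^n=\cfrac{1}{1-b_0z-\cfrac{\lambda_1z^2}{1-b_1z-\cfrac{\lambda_2z^2}{1-b_2z-\cdots}}},$$
where $b_h=[h]_q+[h+1]_q$ and $\lambda_h=[h]_q^2$. Equivalently, $\sum_{\sigma\in S_n}q^{\widetilde{\text{Cr}}(\sigma)}$ is the $n$‑th moment of the orthogonal polynomial sequence with these recurrence coefficients — a rescaled $q$‑Laguerre family, i.e. the Al‑Salam--Chihara/Askey--Wilson system at the $\alpha=\beta=1$, $\gamma=\delta=0$, $y=1$ specialization of the ASEP partition function already invoked in the discussion above.

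To reach the stated closed form one may quote Williams' evaluation \cite{Williams2005} directly, or reprove it by the standard device for $q$‑integer $J$‑fractions: first compute the generating function for walks confined to a strip $0\le h\le k$ — the $k$‑th convergent of the continued fraction, a rational function whose partial‑fraction expansion is elementary in the $[k-i]_q$ — and then recover the unrestricted moments by letting the strip grow and collecting the contributions, an inclusion--exclusion over the strip height $k$ and an auxiliary index $i$. The boundary corrections in this inclusion--exclusion supply precisely the signs $(-1)^i$, the $q$‑shifts $q^{k(i-k)}$ and $q^{k-i}$, and the binomial coefficients $\binom{n}{i}$ and $\binom{n}{i-1}$, producing the asserted double alternating sum.

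The principal obstacle is the bijective bookkeeping of the first stage: one must ensure that it is the \emph{directed} crossing number — rather than some entanglement of crossings with nestings or alignments — that linearizes on the path side, and one must choose the label conventions so that the step weights come out as exactly the $q$‑integers above; a mismatch there would corrupt the $q$‑powers in the final identity. Once the recurrence coefficients are fixed, the closed‑form evaluation in the last stage is a routine, if slightly intricate, manipulation of $q$‑binomial identities, and in any event it is carried out in \cite{Williams2005} (equivalently \cite{Corteel2007}) and may be invoked directly.
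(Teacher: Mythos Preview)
The paper does not prove this theorem at all: it is stated with attribution to \cite{Williams2005,Corteel2007} and used as a black box (only to motivate Remark~\ref{remarkCorteel}), so there is no ``paper's own proof'' to compare against. Your sketch is a reasonable outline of the argument in the cited references --- Fran\c{c}on--Viennot--type bijection to Laguerre histories, $J$-fraction with $q$-integer weights, then Williams' closed-form evaluation --- and since the paper simply quotes the result, nothing further is required here.
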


\begin{remark} Since directed crossings form a subset of all crossings, $\text{Cr}(\sigma)\geq \widetilde{\text{Cr}}(\sigma)$ and so
$\sum_\sigma q^{\text{Cr}(\sigma)}\leq \sum_\sigma q^{\widetilde{\text{Cr}}(\sigma)}$ for $q\in [0,1)$. Proceeding along broadly similar lines to the proof of Theorem~\ref{thmgood}, one obtans that for $|q|<1$,
$$\lim_{n\to\infty}\left(\sum_{\sigma\in S_{2n}}q^{\widetilde{\text{Cr}}(\sigma)}\right)^\frac{1}{2n}=\frac{2}{\sqrt{1-q}},$$
yielding the following upper bound for the norm of the $q$-circular operator:
$$ ||c_q||\leq \frac{2}{\sqrt{1-q}},\quad\quad\quad q\in [0,1).$$
The alert reader may in fact note that $\text{Cr}(\sigma)\geq \widetilde{\text{Cr}}(\sigma)+2$, where the bound is met with equality for $\sigma(i)=i+1\!\mod n$, but the distinction disappears when taking the $n\to\infty$ limit of the $1/(2n)$ power of the corresponding sum. However, the above upper can be derived much more directly by recalling the combinatorial definitions of the norms of $s_q$ and $c_q$ in (\ref{norms}) and (\ref{normc}) respectively and noticing that (\ref{normc}) merely sums over fewer diagrams than (\ref{norms}). Thus, for $q\in [0,1)$, $||c_q||\leq ||s_q||$ and the result follows recalling that, by (\ref{sproperties}), $||s_q||=2/\sqrt{1-q}$. In passing, also note that the above constant is the best available for an upper bound of type $1/\sqrt{1-q}$, which stems from the $q=0$ case and the fact that $||c_0||=2$.

A similar, though somewhat weaker, bound valid for all $q\in (-1,1)$ can be derived without recourse to combinatorics. Namely, recalling the formulation of $s_q$ and $c_q$ in terms of the creation and anihilation operators $\ell_i,\ell_i^*$ on the $q$-Fock space, an application of the triangle inequality yields $||c_q||\leq \sqrt 2 ||s_q||= 2\sqrt{2} /\sqrt{1-q}$ for $q\in (-1,1)$

\label{remarkCorteel}\end{remark}

At this time, there is no analogue of Theorems~\ref{thmTouchard} and \ref{thmCorteel} for the generating function restricted to crossings in parity-reversing pairings. However, by deriving basic combinatorial properties of parity-reversing pairings, we will show the surprising fact that a tractable expression for the generating function is in fact needed, as the expression for $||c_q||$ cannot be derived from the asymptotics of the indivial terms of the sequence.

\section{Combinatorics of Parity-reversing Pairings}
\label{combinatorics}
By (\ref{normc}) of the previous section, the norm of the $q$-circular operator is realized as the large-$n$ limit of the generating sequence $R_n(q)=\sum_{k}r_{n,k} q^k$ raised to the power $(2n)^{-1}$. The previous developments in the combinatorics of chord crossing diagrams suggest that obtaining a general expression for $R_n(q)$ may turn out to be relatively difficult and that the resulting expression may not be of a sufficiently tractable form to enable the computation of the limit.  Nevertheless, the basic properties of the doubly-indexed $\{r_{n,k}\}$ sequence and the closed-form expressions for the fixed-$k$ sequences $\{r_{n,k}\}_{n\geq 1}$ for $k\in\{0,1,\ldots,11\}$, derived in the present section, will provide analytic insight into the nature of the convergence taking place. 

\subsection{Connected Diagrams and Associated Transforms}
\label{connectedSec}

The combinatorial technique of much use in diagram enumeration consists of identifying families of naturally occurring irreducibles into which objects can be
decomposed and from which they can be uniquely reconstructed. 
In the present context, the irreducibles considered are termed \emph{connected diagrams} and defined as chord-crossing diagrams whose set of chords cannot be partitioned so that no chord from one part intersects a chord from another.
For example, the diagram of Fig.~\ref{connected diagram}-(a) is connected while that of Fig.~\ref{connected diagram}-(b) is not. The following proposition shows that the connected components of any pairing on $(\ast\circ)^n$ induce noncrossing partitions of  $(\ast\circ)^n$ into blocks of even length and that, conversely, a choice of non-crossing partition and connected components uniquely determines a pairing.

\begin{figure}
\includegraphics[scale=0.5]{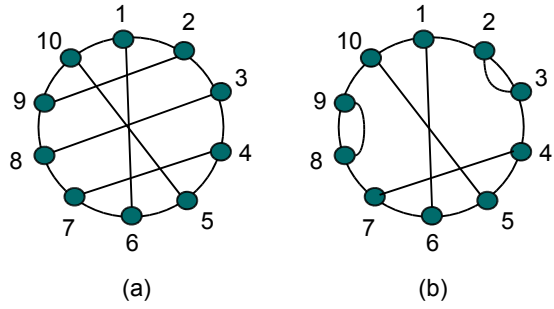}\centering
\caption{Two parity-reversing pairings on $\{1,\ldots,10\}$ represented as chord-crossing diagrams: (a) is connected whereas (b) decomposes into three connected components.}
\label{connected diagram}
\end{figure}

\begin{prop}
Let $b_n$ denote the number of connected diagrams among the parity reversing pairings on $(\ast\circ)^n$ and set $b_0=1$. Then, for all $n\in\mathbb N$,
\begin{equation}\displaystyle n!=\sum_{\substack{\pi\in NC_\text{even}(n)\\\pi=\{V_1,\ldots,V_{|\pi|}\}}} b_{|V_1|/2}\ldots b_{|V_{|\pi|}|/2},\label{decompositionEq}\end{equation}
where $NC_\text{even}(n)$ contains the non-crossing partitions of $[n]$ into blocks of even length.
\end{prop}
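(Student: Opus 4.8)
The plan is to exploit the bijection between parity-reversing pairings on $(\ast\circ)^n$ and permutations in $S_n$ established in Proposition~\ref{propPermutations}, so that the left-hand side $n!=|\mathcal R_n|=|S_n|$ is simply the total count. The right-hand side should then arise from a decomposition of an arbitrary parity-reversing pairing into its connected components. The key structural claim to verify first is that the connected components of a pairing $\pi$ on $(\ast\circ)^n$ partition the underlying label set $[n]$ into intervals — more precisely, that the set of positions occupied by the chords of a single connected component is a union of $\ast\circ$-blocks that forms an ``interval-like'' block in $NC_{\text{even}}(n)$, and that the collection of these blocks is a non-crossing partition. I would establish this by the standard argument: if two chords belong to distinct connected components, then by definition no chord of one crosses a chord of the other; a short argument then shows the positions they span cannot interleave, which is exactly non-crossingness of the induced partition, and each block has even size since each $\ast$ must pair with a $\circ$ within its own component.

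Next I would set up the counting. Given a non-crossing partition $\pi=\{V_1,\ldots,V_{|\pi|}\}$ of $[n]$ into even blocks, restricting a parity-reversing pairing to the positions corresponding to $V_j$ yields a connected parity-reversing pairing on $|V_j|$ points (after relabeling the $\ast$'s and $\circ$'s of $V_j$ consecutively, which does not change crossings or connectivity). Conversely, any choice of a non-crossing even partition together with a connected parity-reversing pairing on each block reassembles uniquely into a parity-reversing pairing on $(\ast\circ)^n$: the reconstruction is forced because distinct blocks are non-crossing and self-contained. Hence the number of parity-reversing pairings is $\sum_{\pi\in NC_{\text{even}}(n)} \prod_{j} b_{|V_j|/2}$, which, combined with $|\mathcal R_n|=n!$, gives \eqref{decompositionEq}. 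One should also check the degenerate bookkeeping that $b_0=1$ is the right convention (it never actually contributes since empty blocks do not occur, so this is a harmless normalization that makes downstream generating-function manipulations clean).

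The main obstacle is the first step: carefully proving that connected components induce a \emph{non-crossing} partition into even blocks, and in particular that a connected component occupies a contiguous set of $\ast\circ$-positions (equivalently, that the induced partition has no "interleaving" blocks). This requires an argument at the level of the linear $(\ast\circ)^n$ ordering — if chords $\{a,b\}$ and $\{c,d\}$ lie in different components and were interleaved in position, one must derive a contradiction, which typically goes by following a path of pairwise crossings within one component and showing it would have to cross a chord of the other. I would phrase this via the obvious claim that the transitive closure of the "crosses" relation on chords has equivalence classes whose position-supports are non-crossing, a fact that is essentially folklore for chord diagrams; the only mild care needed in the present setting is tracking the $\ast$/$\circ$ parity so that each block has even size and the restricted diagram is again a legitimate parity-reversing pairing. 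Everything after that is routine multiplicativity of the count over blocks of a non-crossing partition.
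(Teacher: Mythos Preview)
Your approach is essentially the paper's: decompose a parity-reversing pairing into its connected components, show that their supports form a non-crossing partition of $[2n]$ into even blocks, and count multiplicatively to obtain $n!=\sum_{\pi}\prod_j b_{|V_j|/2}$. One correction: a connected component need \emph{not} occupy a contiguous set of positions (for instance the component with support $\{1,4,5,6,7,10\}$ in the paper's own example is not an interval), so ``contiguous'' is strictly stronger than, and not equivalent to, the non-crossing property you actually need---drop that word and argue non-crossingness directly, as you correctly outline elsewhere in the proposal.
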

\begin{proof}
Consider a parity-reversing pairing $\pi$ on $[2n]$. If $\pi$ is connected, let the corresponding partition of $[2n]$ be the maximum partition (i.e. the set $[2n]$ itself). Otherwise, one can write $\pi=C_1\cup C_2$ where no pair in $C_1$ crosses a pair in $C_2$. In that case, the sets $S_1$ and $S_2$ given by unions of elements in $C_1$ and $C_2$, respectively, form a partition of $[2n]$. If the two parts crossed, there would have to be elements $x_1,y_1\in S_1$ and $x_2,y_2\in S_2$ such that $x_1<x_2<y_1<y_2$ or $x_2<x_1<y_2<y_1$. Either case would imply that the chord $\{x_1,y_1\}$ in $C_1$ crosses the chord in $\{x_2,y_2\}\in C_2$, which is an impossibility.
The resulting partition is therefore non-crossing. Finally, notice that $C_1$ and $C_2$ can be identified in a natural manner with parity-reversing pairings on $\{1,\ldots,|S_1|\}$ and $\{1,\ldots,|S_2|\}$, respectively. Thus, any parity-reversing pairing on $[2n]$ can be decomposed inductively into a non-crossing partition $P_1\cup\ldots \cup P_\ell=[2n]$ where each part $P_i$ corresponds to a connected parity-reversing pairing on $\{1,\ldots,|P_i|\}$. 

Now notice that the above decomposition yields the same result regardless of the choices of $C_1$ and $C_2$ made at every step leads and that, furthermore, the result is unique in the sense that no two distinct pairings can decompose into the same assignment of partitions and connected components. The parity-reversing pairings on $[2n]$ are therefore in bijective correspondence with the collection of non-crossing partitions on $[2n]$ whose each part is assigned a connected parity-reversing pairing compatible with the size of the part.
Transcribing the decomposition into the enumerative language then yields the desired formula.
\end{proof}

In the light of the above decomposition, the diagram of Figure~\ref{connected diagram} (a) uniquely decomposes into a single connected diagram, namely $\{\{1,6\},\{2,9\},\{3,8\},\{4,7\},\{5,10\}\}$, and the associated partition of $\{1,\ldots, 10\}$ is the maximal one (i.e. the set itself). Similarly, the pairing depicted in (b) uniquely decomposes into three connected diagrams, namely $\{\{1,6\},\{4,7\},\{5,10\}\}$, $\{\{2,3\}\}$, and $\{\{5,10\}\}$, corresponding to the parts $\{1,4,5,6,7,10\}$, $\{2,3\}$, and $\{8,9\}$, respectively.

Since connected diagrams are a subset of chord-crossing diagrams on $[2n]$, the notion of crossings extends naturally to connected components. Recall that $r_{n,k}$ denotes the number of parity-reversing pairings on $(\ast\circ)^n$ with $k$ crossings and let
$b_{n,k}$ count the subset of those pairings that are connected. Refining (\ref{decompositionEq}) to take into account the numbers of crossings in a pairing yields the following relation, which provides the starting point for the developments of the next section.

\begin{equation}r_{n,k}=\sum_{\substack{\pi=\{V_1,\ldots,V_{\ell}\}\in NC_{\text{even}}(2n)\\k=k_1+\ldots+k_{\ell}}} b_{|V_1|/2,k_1}\ldots b_{|V_\ell|/2,k_\ell}\label{decompositionEq2}\end{equation}

It is worthwhile pausing here to remark that the above relations admit a particularly concise representation in the form of generating functions. Consider the formal sums $R(z,q)=\sum_{n,k\geq 0} r_{n,k}z^n q^k$, $B(z,q)=\sum_{n,k\geq 0} b_{n,k}z^n q^k$, $B(z)=\sum_{n=0}^\infty b_n z^n$ and $F(z)=\sum_{n=0}^\infty  n!z^n$, referred to as the (ordinary) power series generating functions of the $\{r_{n,k}\}_{n,k\geq 0},\,\{b_{n,k}\}_{n,k\geq 0},\,\{b_n\}_{n\geq}$ and $\{n!\}_{n\geq 0}$ sequences, respectively. We then have the following proposition.

\begin{prop} $F(z)= B(zF^2(z))$ and $R(z,q)=B(zR^2(z,q),q)$.\end{prop}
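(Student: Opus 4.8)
The plan is to translate the combinatorial decompositions of Propositions~\ref{propPermutations} (in the refined form of equation~(\ref{decompositionEq})) and the crossing-refined version (\ref{decompositionEq2}) into the language of formal power series. The key observation is that a non-crossing partition of $[n]$ into blocks of even sizes can be built recursively by looking at the block containing the element $1$: if that block is $\{1, i_2, \dots, i_{2m}\}$, it splits the remaining ground set into $2m$ intervals, each of which must be partitioned by a non-crossing even partition of its own, and this decomposition is a bijection. Combined with the fact that each block of size $2m$ in the partition carries a connected parity-reversing pairing on $2m$ points (counted by $b_m$), one gets that $F(z) = \sum_n n! z^n$, being the generating function for \emph{all} parity-reversing pairings weighted trivially (since $|\mathcal R_n| = n!$), satisfies a functional equation in which the ``block of size $2m$'' contributes a factor $b_m (zF^2(z))^{m}$ — one $z$ for the block itself and a copy of $F(z)$ for each of the two ``gaps'' created on either side of each...

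Actually, let me restate this cleanly. First I would establish the scalar identity $F(z) = B(zF^2(z))$. The standard Lagrange-type decomposition of non-crossing partitions (equivalently, the non-crossing-partition analogue of the exponential formula, see~\cite{NicaSpeicher}) says that if a class of ``atoms'' of size $m \ge 1$ has ordinary generating function $A(z) = \sum_{m\ge 1} a_m z^m$ and $G(z) = 1 + \sum_{n \ge 1} g_n z^n$ counts non-crossing partitions of $[n]$ with each block of size $m$ weighted by $a_m$, then $G(z) = 1 + A(zG(z)^2)\cdot$... no. The correct relation is $G(z) = \frac{1}{1 - A(zG(z))}$ for the sequence-composition, but for the specific ``first block containing $1$'' recursion one gets $G = 1 + \sum_{m} a_m z^m G^{2m+1}$ rearranged. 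The cleanest route is: $G(z) = B(z G(z)^2)$ where $B$ collects the atoms \emph{together with the trivial partition}, i.e. $B(z) = 1 + \sum_{m\ge1} b_m z^m$. Here the atoms are connected parity-reversing pairings, $a_m = b_m$, and $G = F$ since (\ref{decompositionEq}) says exactly that $n! = \sum_{\pi \in NC_{\mathrm{even}}(n)} \prod_i b_{|V_i|/2}$, i.e. $F$ counts even non-crossing partitions with block-weights $b_m$. This gives the first assertion; I would justify the functional equation itself by the ``remove the block of $1$'' bijection described above, noting that a block of size $2m$ in $NC_{\mathrm{even}}(2n)$ leaves $2m$ intervals to be filled, contributing $b_m \cdot z^m \cdot F(z)^{2m}$, and summing over $m \ge 0$ (the $m=0$ term being the empty partition, accounting for $b_0 = 1$) yields $F(z) = \sum_{m \ge 0} b_m (z F(z)^2)^m = B(z F(z)^2)$.

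Second, I would obtain $R(z,q) = B(z R(z,q)^2, q)$ by running the identical argument on the crossing-refined identity (\ref{decompositionEq2}). The only new point is bookkeeping of the exponent of $q$: the total number of crossings of a pairing is the sum of the crossing-numbers of its connected components (a crossing involves two chords from the same component, since distinct components are non-crossing by construction), so the $q$-statistic is additive over the block structure in exactly the way the $z$-statistic (number of points) is. Hence replacing every $b_m$ by $B_m(q) := \sum_k b_{m,k} q^k$ and every $r_n$ by $R_n(q) := \sum_k r_{n,k} q^k$ and re-running the ``remove the block of $1$'' recursion gives $R(z,q) = \sum_{m\ge 0} B_m(q)\,\bigl(z R(z,q)^2\bigr)^m = B(z R(z,q)^2, q)$, where $B(w,q) = \sum_{m \ge 0} B_m(q) w^m = \sum_{m,k \ge 0} b_{m,k} w^m q^k$ with the convention $b_{0,0} = 1$.

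I expect the main (minor) obstacle to be purely a matter of careful formal-power-series hygiene rather than any real difficulty: one must check that all the substitutions $z \mapsto zF(z)^2$ and $z \mapsto zR(z,q)^2$ are legitimate compositions of formal power series (they are, since $F(z)^2$ and $R(z,q)^2$ have constant term $1$, so $zF(z)^2$ and $zR(z,q)^2$ have zero constant term and the composition $B(\,\cdot\,)$ converges coefficient-wise), and that the ``first block'' recursion is genuinely a bijection — this is the classical fact that every non-crossing partition of a linearly ordered set is determined by the block of its least element together with induced non-crossing partitions of the complementary intervals, which one may cite from~\cite{NicaSpeicher}. Everything else is the direct transcription of (\ref{decompositionEq}) and (\ref{decompositionEq2}) into generating-function form, so the proof is short.

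\begin{proof}
Both identities are transcriptions of the decompositions established above into the language of formal power series. Recall from the proof of~(\ref{decompositionEq}) that every parity-reversing pairing on $[2n]$ decomposes uniquely as a non-crossing partition $\pi \in NC_{\mathrm{even}}(2n)$ together with, on each block $V_i$, a connected parity-reversing pairing on $|V_i|$ points; since distinct connected components are pairwise non-crossing, both the number of points and the number of crossings are additive over the blocks. Equivalently, every non-crossing partition of a linearly ordered set is determined by the block $V$ containing its least element together with the induced non-crossing partitions of the intervals into which the complement is cut by $V$ (cf.~\cite{NicaSpeicher}); if $|V| = 2m$, this creates $2m$ such intervals, each of which must again carry a parity-reversing pairing.

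Reading this recursion at the level of generating functions: a block of size $2m$ containing the least element contributes a factor $b_m z^m$ (for the block itself) times a factor $F(z)$ for each of the $2m$ intervals it creates, and summing over $m \ge 0$ — the term $m = 0$ being the empty pairing, which is legitimate since $b_0 = 1$ — gives
$$F(z) = \sum_{m \ge 0} b_m \bigl(z F(z)^2\bigr)^m = B\bigl(z F(z)^2\bigr),$$
where the substitution is a well-defined composition of formal power series because $z F(z)^2$ has zero constant term. Running the identical argument on the crossing-refined identity~(\ref{decompositionEq2}), and using that the $q$-exponent (total number of crossings) is additive over the connected components in exactly the way the $z$-exponent (number of points) is, replaces $b_m$ by $B_m(q) = \sum_k b_{m,k} q^k$ and $F$ by $R(\,\cdot\,,q)$, yielding
$$R(z,q) = \sum_{m \ge 0} B_m(q) \bigl(z R(z,q)^2\bigr)^m = B\bigl(z R(z,q)^2, q\bigr),$$
again a legitimate composition since $z R(z,q)^2$ has zero constant term.
\end{proof}
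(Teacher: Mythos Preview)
Your proof is correct and follows essentially the same approach as the paper's: both arguments single out the connected component (equivalently, the block of the induced non-crossing partition) containing the first element, observe that a root component of $m$ chords cuts the remaining points into $2m$ intervals each carrying an arbitrary parity-reversing pairing, and read off the functional equation from this recursion. The paper writes the resulting coefficient identity $n!=\sum_i b_i\sum_{a_1+\cdots+a_{2i}=n-i}a_1!\cdots a_{2i}!$ explicitly before identifying it as the coefficient of $z^n$ in $B(zF^2(z))$, while you work directly at the generating-function level, but the underlying bijection is identical.
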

\begin{proof}
Suppose that the connected diagram containing $\ast_1$, referred to as the root component, contains $i$ chords. The chords of the root component partition the remaining unpaired elements into $2i$ (potentially empty) intervals. Each interval of length $j$ contains $j$ $\circ$-symbols and $j$ $\ast$-symbols, so there are $j!$ possible pairings on each interval such that no pair crosses the root component. The set of all pairings on $(\ast\circ)^n$ is in bijective correspondence with the choice of (1) the root component and (2) the pairings on the resulting intervals. It follows that 
$$n!=\sum_{i=0}^n b_i\sum_{\substack{a_1,\ldots,a_{2i}\geq 0\\a_1+\ldots+a_{2i}=n-i}}a_1!\ldots a_{2i}!.$$
At this point, the reader may readily verify that the right-hand side of the above expression corresponds to the coefficient of $z^n$ in $B(zF^2(z))$. 

The proof of the second relation proceeds analogously taking into account the crossings numbers.
\end{proof}

\subsection{Crossings in Parity-reversing Chord Diagrams} We next derive several elementary properties of the (doubly-indexed) triangular sequences $\{r_{n,k}\}_{n\geq 1,k\geq 0}$ and $\{b_{n,k}\}_{n\geq 1,k\geq 0}$ which are used to derive closed-form expressions for $\{r_{n,k}\}_{n\geq 1}$ for fixed $k\leq 11$. In terms of the structural insight they provide, Proposition~\ref{propnocross} presents an immediate similarity and two interesting differences between (unrestricted) pairings and the parity-reversing pairings. Namely, it turns out that while the parity-reversing restriction leaves unchanged the collection of non-crossing pairings, it eliminates all pairings with exactly one or two crossings. More generally, Proposition~\ref{prop pairings} considers the extremes of the crossing numbers in parity-reversing pairings and, in particular, in connected parity-reversing pairings. These extremes are illustrated in Figures~\ref{min pairings} and \ref{max pairings} for $n=3,4,5$.

\begin{proposition} For all $n\in\mathbb N$,
\begin{enumerate}
\item $\displaystyle r_{n,0}=\frac{1}{n+1}{{2n}\choose n}$
\item $\displaystyle r_{n,1}=0$
\item $\displaystyle r_{n,2}=0$
\end{enumerate} 
In other words, while the number of non-crossing parity-reversing pairings on $(\ast\circ)^n$ is counted by the $n^\text{th}$ Catalan number, there are no parity-reversing pairings containing exactly one or two crossings. 
\label{propnocross}
\end{proposition}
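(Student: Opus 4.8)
The plan is to exploit the bijection of Proposition~\ref{propPermutations} between $\mathcal R_n$ and $S_n$, so that parity-reversing pairings become permutations and the crossing count of Definition~\ref{defcross} becomes the combinatorial statistic described in Remark~\ref{remarkcross}: pairs $(i,\sigma(i))$ and $(j,\sigma(j))$ cross iff exactly one of $j,\sigma(j)$ lies strictly between $i$ and $\sigma(i)$. For part (1), I would argue that a parity-reversing pairing on $[2n]$ is non-crossing iff it is non-crossing as an ordinary pairing, so $r_{n,0}$ equals the number of non-crossing pairings on $[2n]$ that happen to be parity-reversing; but every non-crossing pairing on $[2n]$ is automatically parity-reversing (in a non-crossing pairing each block $\{a,b\}$ encloses an even number of points, hence $b-a$ is odd, hence $a,b$ have opposite parity — this is an easy induction peeling off an innermost block). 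Therefore $r_{n,0}=C_n=\frac{1}{n+1}\binom{2n}{n}$, matching the first displayed formula and the remark in the text that the parity-reversing restriction leaves the non-crossing pairings unchanged.

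For parts (2) and (3), the idea is to show that a parity-reversing pairing cannot have exactly one or exactly two crossings, by a parity/counting argument on crossings. First I would establish that a parity-reversing pairing with at least one crossing has at least two crossings: if two chords $\{a_1,b_1\}$ and $\{a_2,b_2\}$ cross, then going around the circle the four endpoints alternate $a_1,a_2,b_1,b_2$, and one checks from the parities (each chord joins an even to an odd point) that the number of points strictly between $a_1$ and $a_2$, between $a_2$ and $b_1$, etc., forces a second crossing — concretely, the arc cut off by a crossing always contains an odd number of endpoints of other chords, and an endpoint whose partner lies outside the arc produces a fresh crossing. The cleanest route is to count, for a fixed chord $c$, the quantity $\sum_{c'\neq c}(\text{number of endpoints of }c'\text{ inside the arc})\bmod 2$: because the interior of the arc has a definite parity (odd, since the two endpoints of $c$ have opposite parity so an odd number of integers lie between them, hence an odd number of other endpoints), the number of chords with exactly one endpoint inside — i.e. the number of chords crossing $c$ — is odd whenever it is nonzero in a way I can control. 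Pushing this through shows each crossed chord is crossed an even number... I would instead phrase it as: the total number of crossings $\mathrm{cr}(\pi)$ satisfies a congruence forcing it to avoid $1$ and $2$, or more simply, the "connected component" decomposition of Section~\ref{connectedSec} reduces the claim to connected diagrams, and a connected parity-reversing diagram on $n\geq 2$ chords has at least... crossings — here I would invoke the lower-bound half of the forthcoming Proposition~\ref{prop pairings} on the minimum number of crossings in a connected parity-reversing pairing, which rules out $1$ and $2$ for connected diagrams, and then note that a disconnected pairing with a crossing has a connected component with a crossing, hence $\geq$ that minimum.

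The main obstacle is the self-contained verification that no connected parity-reversing pairing has one or two crossings without already having Proposition~\ref{prop pairings} in hand; to keep the argument honest I would prove directly the minimal statement needed: (i) one crossing is impossible because the two crossing chords of a would-be single crossing, say with endpoints in cyclic order $a_1<a_2<b_1<b_2$, cut the circle so that the open arc $(a_1,a_2)$ contains an odd number of integers, hence an even number would be paired within it and at least one endpoint in $(a_1,a_2)$ pairs outside, creating a second crossing — contradiction; (ii) two crossings is impossible by a slightly more careful case analysis on where the partners of the "extra" endpoints land, again using that every sub-arc delimited by endpoints of a chord contains an odd count of integers and therefore cannot be internally saturated by parity-reversing chords. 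I expect step (ii) to be the delicate one, with a handful of configurations to dispatch; I would organize it by the connected/non-connected dichotomy (a $2$-crossing pairing, if disconnected, has each crossing inside one component, forcing a single connected component with exactly two crossings) and then exhaust the small connected cases. Everything else — the Catalan identity and the reduction to connected diagrams — is routine given Proposition~\ref{propPermutations} and the decomposition~\eqref{decompositionEq2}.
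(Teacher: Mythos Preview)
Your treatment of part~(1) is correct and coincides with the paper's: every non-crossing pairing on $[2n]$ is automatically parity-reversing, so $r_{n,0}=C_n$.

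For parts~(2) and~(3), however, there is a genuine gap. Your parity computations are off in two places. First, for a single parity-reversing chord $c=\{p,q\}$, the endpoints have \emph{opposite} parity, so the open arc on either side of $c$ contains an \emph{even} number of points, not an odd number; hence the number of chords crossing $c$ is always even, not ``odd whenever nonzero'' as you write. Second, in your direct argument for $r_{n,1}=0$ you assert that the arc $(a_1,a_2)$ between the first two of the four cyclically ordered endpoints has odd interior; but $a_1$ and $a_2$ belong to different chords and need not have the same parity (take e.g.\ the crossing chords $\{1,4\}$ and $\{2,5\}$, where the arc $(1,2)$ is empty). So that step fails as stated, and your plan for $r_{n,2}=0$ via case analysis is left hanging on it.

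The paper's argument avoids both issues with a single observation that handles (2) and (3) simultaneously, and no case analysis is needed. Given two crossing parity-reversing chords, their four endpoints consist of two points of the same parity (the two ``$\circ$'' endpoints) and two of the other parity (the two ``$\ast$'' endpoints). Because the chords cross, the two $\circ$'s are cyclically adjacent among the four endpoints, as are the two $\ast$'s; the arc $I_1$ between the two $\circ$'s (not containing the $\ast$'s) and the arc $I_2$ between the two $\ast$'s (not containing the $\circ$'s) are therefore two \emph{disjoint} arcs. Since the endpoints of $I_1$ have equal parity, its interior has \emph{odd} cardinality, and likewise for $I_2$. Each of $I_1$, $I_2$ must then contain an endpoint whose partner lies outside, producing a new crossing with one of the two original chords; disjointness of $I_1$ and $I_2$ guarantees these are two \emph{distinct} additional crossings. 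Thus any parity-reversing pairing with at least one crossing has at least three, giving $r_{n,1}=r_{n,2}=0$ in one stroke. Your connected-component reduction and the appeal to Proposition~\ref{prop pairings} are unnecessary (and, as you suspected, the latter would be circular, since the proof of Proposition~\ref{prop pairings}(3) invokes the present proposition).
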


\begin{figure}[htbp]
\includegraphics[scale=0.5]{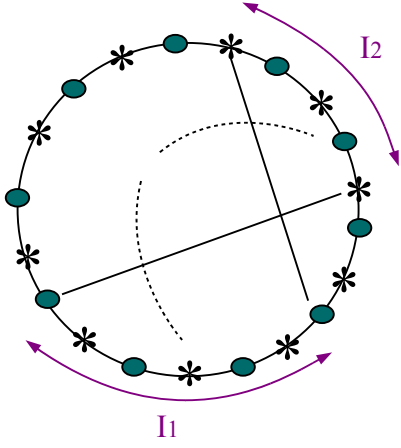}\centering
\caption{Proof of $r_{n,k}=b_{n,k}=0$ for $k=1,2$. The dotted lines represent the segments of chord(s) that pair an element of $I_1$ to an element in its complement, and similarly for $I_2$.}
\label{proof2}
\end{figure}

\begin{proof}
It is well known (e.g. \cite{Errera1931}) that the non-crossing pairings on $[2n]$ are in bijective correspondence with the non-crossing partitions on $[n]$, from which it follows that the non-crossing pairings on $\{1,\ldots,2n\}$ are counted by the $n^\text{th}$ Catalan number $C_n={{2n}\choose n}/(n+1)$. Observing that any non-crossing pairing on $\{1,\ldots,2n\}$ must pair even to odd integers only (i.e. it must be parity-reversing), part (1) follows. 

Next let $(\circ_i,\ast_j)$ and $(\circ_k,\ast_\ell)$ be two pairs that cross, where $1\leq i<j\leq n$. Consider the two possible intervals delineated by $\circ_i$ and $\circ_j$, one of length $2(j-i)+1$ and the other of length $2n-2(j-i)+1$ including the end points. Since the pairs cross,  $\ast_j$ and $\ast_\ell$ must belong to the same interval. Denote by $I_1$ the interval between $\circ_i$ and $\circ_j$ that does not contain $\ast_j$ and $\ast_\ell$. Similarly, let $I_2$ be the interval delimited by $\ast_j$ and $\ast_\ell$ that does not contain $\circ_i$ and $\circ_j$. The two intervals are depicted in Figure~\ref{proof2}.
Since the interval $I_1$ contains an odd number of unpaired elements, the pairing in question must contain a chord that connects an unpaired element of $I_1$ to an element in the complement of $I_1$, thus inducing an additional crossing and proving claim (2). But, in fact, since $I_1$ and $I_2$ are disjoint and $I_2$ also contains an odd number of elements, there must be not one but two additional crossings, proving claim (3).
\end{proof}

\begin{remark} Proposition~\ref{propnocross} was observed by Tony Wong, who worked on enumeration of alternating bitstrings during the 2007 Research Experiences for Undergraduates (REU) program.
\end{remark}

\begin{prop}
For $n\geq 1$ and $k\geq 3$,
\begin{enumerate}
\item For $n$ odd, any parity-reversing pairing on $(\ast\circ)^n$ achieves at most ${n\choose 2}$ crossings. The unique pairing that achieves ${n\choose 2}$ crossings is connected and is given by $\{(i,n/2+(2i-1)/i)\}_{i=1,\ldots,n}$. Thus, $r_{n,{n\choose 2}}=b_{n,{n\choose 2}}=1$ and $r_{n,k}=b_{n,k}=0$ for $k>{n\choose 2}$.
\item For $n$ even, any parity-reversing pairing on $(\ast\circ)^n$ achieves at most $n(n-2)/2$ crossings. The two pairings that achieve $n(n-2)/2$ crossings are connected and are given by $\{(i,n/2+i)\}_{i=1,\ldots,n}$  and $\{(i,n/2+i-1)\}_{i=1,\ldots,n}$. Thus, $r_{n,n(n-2)/2}=b_{n,n(n-2)/2}=1$ and $r_{n,k}=b_{n,k}=0$ for $k>n(n-2)/2$.
\item There exists no \emph{connected} parity-reversing pairing on $(\ast\circ)^n$ with less than $n$ crossings. For $n=3$, the unique pairing with three crossings is given by $\{(1,2),(2,3),(3,1)\}$. For $n\geq 4$, there are two pairings with $n$ crossings, given by $\{(i,i+1)\}_{i=1,\ldots,n}$ and $\{(i,i+2)\}_{i=1,\ldots,n}$.
In other words, $b_{3,3}=1$, $b_{n,n}=2$ for $n\geq 4$, and $b_{n,k}=0$ for $k<n$.
\end{enumerate}\label{prop pairings}
\end{prop}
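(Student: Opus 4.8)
The engine for all three parts is a single parity observation, which I would isolate first: \emph{in any parity-reversing pairing $\pi$, every chord is crossed by an even number of the other chords.} To see it, fix a chord $\{a,b\}$ with $a<b$; since exactly one of $a,b$ is odd, the arc $(a,b)=\{a+1,\ldots,b-1\}$ has even size $b-a-1$ (as does the complementary arc). Each point of $(a,b)$ is an endpoint of some chord, and $\{c,d\}$ crosses $\{a,b\}$ precisely when exactly one of $c,d$ lies in $(a,b)$; writing $s$ for the number of chords with both endpoints in $(a,b)$ and $t$ for the number with exactly one, we get $2s+t=b-a-1$, so $t$ is even. This is the only real use of the parity-reversing hypothesis.

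Granting the observation, the bounds in (1) and (2) drop out. Each chord is crossed by at most $n-1$ others; by the observation this count is even, hence at most $n-1$ when $n$ is odd and at most $n-2$ when $n$ is even. Summing over the $n$ chords, $2\,\text{cr}(\pi)$ is at most $n(n-1)$ for $n$ odd and at most $n(n-2)$ for $n$ even, which is the asserted bound. Equality forces every chord to be crossed by exactly $n-1$ (resp.\ $n-2$) others, so the crossing graph of $\pi$ (one vertex per chord, one edge per crossing pair) is $K_n$ in the odd case and $K_n$ with a perfect matching removed in the even case. For uniqueness I would peel off the chord $\{1,b\}$ through the least point $1$: it is crossed exactly by the chords having one endpoint in the arc $(1,b)$, so comparing the arc sizes $b-2$ and $2n-b$ with the forced crossing count pins $b$ down — to $b=n+1$ when $n$ is odd and to $b\in\{n,n+2\}$ when $n$ is even; deleting this chord, relabelling, and inducting on $n$ then identifies $\pi$. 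The result is the single ``antipodal'' pairing $\{\{j,j+n\}\}_{j=1}^{n}$ for $n$ odd and the two pairings listed in (2) for $n$ even. One then notes that $\{j,j+n\}$ is parity-reversing exactly when $n$ is odd — precisely why (1) has one extremal diagram and (2) has two — and each extremal diagram is connected, its crossing graph being connected.

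Part (3) again uses the observation, now read off the crossing graph, which is connected precisely when $\pi$ is a connected diagram. A chord with crossing number $0$ would be an isolated vertex, impossible in a connected diagram on $n\ge 2$ chords, so every vertex has even degree $\ge 2$; hence $2\,\text{cr}(\pi)=\sum_{\text{chords}}\deg\ge 2n$, giving $\text{cr}(\pi)\ge n$ and $b_{n,k}=0$ for $k<n$. If $\text{cr}(\pi)=n$ the degree sum is exactly $2n$, so the crossing graph is $2$-regular and connected, i.e.\ a single cycle $C_n$. What remains is the classification of chord diagrams whose crossing graph is $C_n$: listing the chords $c_1,\ldots,c_n$ in cyclic order along this cycle, so that $c_i$ crosses $c_{i\pm 1}$ and no others, one tracks how the successive crossings propagate around the circle of $2n$ points and finds the chords must ``spiral'' in exactly one of two ways, giving $b_{n,n}=2$ for $n\ge 4$; for $n=3$ the cycle $C_3$ is the complete graph $K_3$, so only the all-crossing pairing $\{(1,2),(2,3),(3,1)\}$ qualifies and $b_{3,3}=1$.

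The two classification steps are where the work concentrates. In the even case of (2) the peeling induction is delicate: deleting the chord at the least point does not return an extremal configuration of the same size, but one carrying a distinguished ``satellite'' chord trapped in an arc, which must be tracked through the recursion, and two solutions survive. In part (3), showing that the only chord diagrams with crossing graph $C_n$ are the two spirals requires systematically excluding every other way an $n$-cycle of chords can be embedded on $2n$ points, and the small case $n=3$ (where $C_3=K_3$) has to be handled by hand.
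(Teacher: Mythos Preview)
Your approach is essentially the paper's: both hinge on the same parity observation (every chord in a parity-reversing pairing is crossed by an even number of others), and both derive the bounds in (1)--(3) by the same degree-sum counting on the crossing graph. You isolate the parity lemma more cleanly than the paper, which instead refers back to the proof of an earlier proposition, and your crossing-graph language is a nice organizing device the paper does not use explicitly.

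The one place where you diverge is in the uniqueness arguments, and here your ``peel and induct'' is more elaborate than needed and has a wrinkle worth noting: after deleting the chord through $1$ and relabelling, the resulting pairing is \emph{not} parity-reversing. For instance at $n=3$, deleting $\{1,4\}$ from $\{\{1,4\},\{2,5\},\{3,6\}\}$ and relabelling $\{2,3,5,6\}\to\{1,2,3,4\}$ gives $\{\{1,3\},\{2,4\}\}$, which pairs odd with odd and even with even. So the induction hypothesis, if stated as ``parity-reversing with complete crossing graph,'' does not apply to the peeled diagram. The paper avoids this by arguing chord-by-chord with no deletion: in the odd case, a chord crossed by all $n-1$ others must leave exactly $n-1$ points on each side, which forces the chord through each $j$ to be $\{j,j+n\}$ directly; in the even case the same per-chord count gives two admissible positions for each chord, and one checks that only two global assignments are consistent. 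Your peeling can certainly be repaired --- once the crossing graph is known to be $K_n$ (resp.\ $K_n$ minus a perfect matching), the classification is a statement about arbitrary chord diagrams and needs no parity hypothesis --- but the paper's direct route is shorter. For part (3), both you and the paper arrive at ``crossing graph $=C_n$'' identically and then leave the final identification of the two spirals as a brief sketch.
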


\begin{proof} \begin{enumerate}
\item For any $n$, the upper bound follows immediately from the fact that there are ${n\choose 2}$ unordered pairs of chords in a diagram with $n$ chords. To exhibit the pairing that achieves the bound for $n$ odd, note that any chord partitions the circle into two halves: one containing $2m$ unpaired elements and the other $2(n-1-m)$ unpaired elements, for some $m=0,\ldots,n$. For the chord to intersect $n-1$ other chords, it is necessary that $m=(n-1)/2$. Since this is true for every chord, a unique pairing is obtained. The resulting pairing is illustrated in Fig.~\ref{max pairings}-a for $n=5$.

\item Suppose that some pairing achieves ${n\choose 2}$ crossings for $n$ even. It follows that each chord intersects $n-1$ other chords. Fix a chord and again note that the remainder of the circle is partitioned into two halves, containing $2m$ and $2(n-1-m)$ unpaired elements respectively. For the chord to intersect $n-1$ other chords, there must be $n-1$ chords starting in one half and ending in the other. This is impossible if $n$ is even as there exists no integer $m$ such that $2m=2(n-1-m)$.
 
Again fixing a chord, the maximal number of crossings involving that chord is in fact obtained when either $m=n/2$ (and therefore $(n-1-m)=n/2-1$) or $m=n/2-1$ (and therefore $(n-1-m)=n/2$). The two choices yield two pairings which are mirror images of eachother. For each pair, there are $n/2-1$ crossings, and the total number of crossings is therefore $n(n-2)/2$. The corresponding pairings are illustrated in Fig.~\ref{max pairings}-b for $n=4$.

\begin{figure}[tbp]
\includegraphics[scale=0.5]{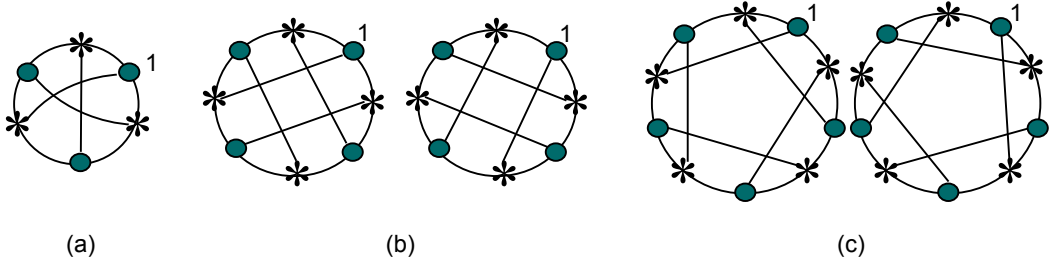}\centering
\caption{Connected parity-reversing pairings with the minimal number of crossings for (a) $n=3$, (b) $n=4$, (c) $n=5$.}
\label{min pairings}
\end{figure}

\begin{figure}[tbp]
\includegraphics[scale=0.5]{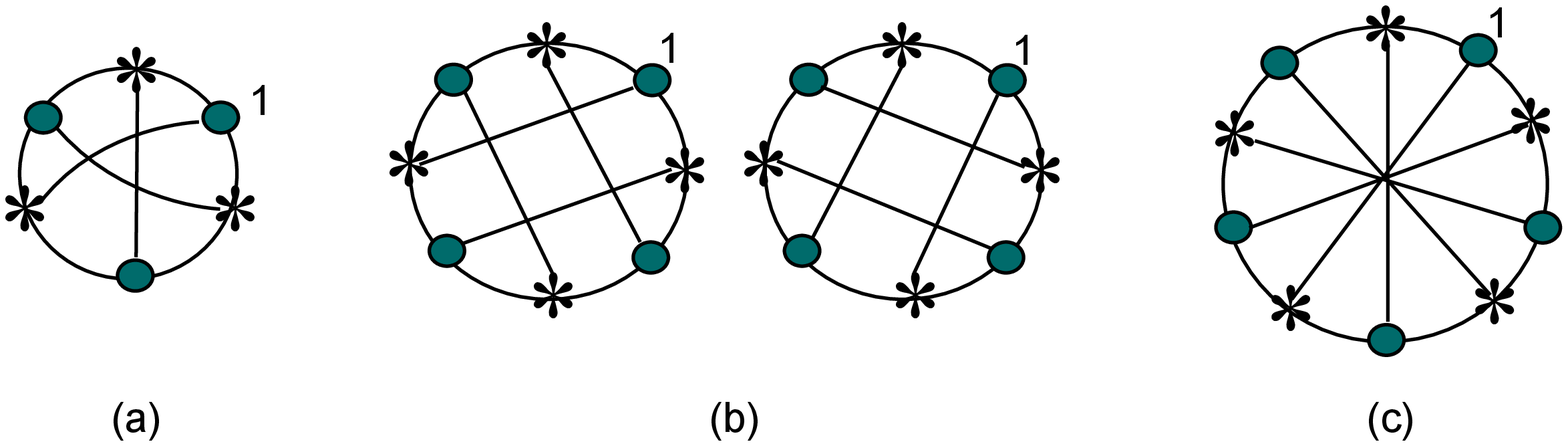}\centering
\caption{Parity-reversing pairings with the maximal number of crossings for (a) $n=3$, (b) $n=4$, (c) $n=5$. Note that a parity-reversing pairing with a maximal number of crossings is necessarily connected.}
\label{max pairings}
\end{figure}

\item A necessary condition for a pairing to be connected is that every chord must intersect at least one other chord. It follows that in a diagram with $n$ chords, $b_{n,k}=0$ if $k<n$. Noting that ${n\choose 2}=3$ for $n=3$, that $b_{3,3}=1$ follows by (2). Due to the parity-reversing structure, if a chord intersects another, it must intersect at least two chords (see Figure~\ref{proof2} and the proof of Proposition~\ref{propnocross}). Since a single crossing is shared by exactly two chords, fixing the total number of crossings to be $n$ yields that every chord can intersect at most two chords. The only two possibilities are therefore $\sigma(i)=i+1$ for $i=1,\ldots,n$ and $\sigma(i)=i-2$ for $i=1,\ldots,n$, where the indices should be interpreted modulo $n$. For $n=3$, the two permutations coincide, while the two possibilities for $n=4,5$ are illustrated in Fig.~\ref{min pairings}.
\end{enumerate}
\end{proof}

\begin{remark}Recalling our original purpose, the norm of the $q$-circular operator can be expressed via Propositions~\ref{propnocross} and \ref{prop pairings} as
\[||c_q||=\lim_{n\to\infty}\left(\frac{1}{n+1}{{2n}\choose{n}}+\sum_{k=3}^{n\choose 2} r_{n,k}q^k\right)^\frac{1}{2n}.\]
Letting $q=0$ recovers the familiar result of free probability, namely that $||c_0||=2$, where $c_0$ denotes the circular operator realized as a sum of two free copies of the semi-circular operator. (Specifically, $c_0=(s_0+i\tilde s_0)/\sqrt 2$ where $s_0$ and $\tilde s_0$ are two free semi-circular operators normalized so that $\phi(s_0^2)=\phi(\tilde s_0^2) = 1$.)\label{remarknormq0}
\end{remark}

The combinatorial approach used to characterize the analytic behavior of $||c_q||$ as a function of $q$ is based on the decomposition of a pairing into connected components and the resulting expression for $r_{n,k}$ in terms of finite products of the $\{b_{\ell,m}\}_{\ell,m}$ sequence. Consider first the following reformulation of (\ref{decompositionEq2}).

\begin{lem} For all $n\geq 1$, $k\geq 0$,
\begin{equation}r_{n,k}=\!\!\!\!\!\!\!\!\!\sum_{\substack{\ell=1,\ldots,k\\\beta=\{(n_1,k_1),\ldots,(n_\ell,k_\ell)\} \\ n_1+\ldots+n_\ell= n\\ k_1+\ldots+k_\ell=k\\n_i,k_i\in Z_+}}\frac{(2n)!}{\Phi_1(\beta)!\Phi_2(\beta)!\ldots \Phi_n(\beta)!(2n+1-l)!}b_{n_1,k_1}\ldots b_{n_\ell,k_{\ell}},\label{decomposition}\end{equation}
where $\Phi_i(\beta)$ counts the number of pairs $\beta$ with the first coordinate equaling $i$, that is, $$\Phi_i(\beta)=\Big|\{(\tilde n,\tilde k)\in\beta\mid \tilde n=i\}\Big|.$$
\label{P}
\end{lem}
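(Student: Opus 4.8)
The plan is to derive \eqref{decomposition} from \eqref{decompositionEq2} by organizing the sum over non-crossing partitions $\pi=\{V_1,\ldots,V_\ell\}\in NC_{\text{even}}(2n)$ according to the \emph{multiset} of block-sizes (halved) together with their crossing counts, rather than the blocks themselves. First I would record that a non-crossing partition of $[2n]$ is determined by far less data than an ordered sequence of blocks: in fact, a classical fact (Kreweras) is that the number of non-crossing partitions of $[2n]$ with block sizes $2n_1,\ldots,2n_\ell$ (as a multiset, with multiplicities $\Phi_1,\ldots,\Phi_n$ of the value $i$) is exactly the multinomial-type quantity
$$\frac{(2n)!}{\Phi_1(\beta)!\,\Phi_2(\beta)!\,\cdots\,\Phi_n(\beta)!\,(2n+1-\ell)!}.$$
This is the combinatorial coefficient appearing in \eqref{decomposition}, so the whole content of the lemma is the assertion that the weight $b_{n_1,k_1}\cdots b_{n_\ell,k_\ell}$ attached to a partition in \eqref{decompositionEq2} depends only on this multiset $\beta=\{(n_1,k_1),\ldots,(n_\ell,k_\ell)\}$ and not on the individual partition.

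Concretely, the steps are: (i) In \eqref{decompositionEq2}, group the terms by the unordered collection $\beta$ of pairs (half-block-size, crossing number). Since $b_{|V_i|/2,k_i}$ depends on a block only through its size, the product $b_{|V_1|/2,k_1}\cdots b_{|V_\ell|/2,k_\ell}$ is constant on the group; it equals $b_{n_1,k_1}\cdots b_{n_\ell,k_\ell}$ with the convention that the list $(n_1,k_1),\ldots,(n_\ell,k_\ell)$ enumerates $\beta$. (ii) Count how many ordered partitions $\pi=\{V_1,\ldots,V_\ell\}\in NC_{\text{even}}(2n)$ with prescribed crossing splitting $k=k_1+\cdots+k_\ell$ feed into a fixed $\beta$: the number of underlying \emph{set} partitions with the right multiset of block sizes is the Kreweras count above, and then for each such set partition the $k_i$ are distributed among its blocks in a way already accounted for by listing $\beta$ as a multiset of pairs — one must be careful that assigning the crossing-count to a block is exactly what promotes a block-size multiset to the pair-multiset $\beta$, so no further multinomial factor is needed beyond what the $\Phi_i(\beta)!$ in the denominator already corrects for. (iii) Note the constraints $n_1+\cdots+n_\ell=n$ (total half-size), $k_1+\cdots+k_\ell=k$, $\ell\le k$ (since by Proposition~\ref{prop pairings}(3) a connected component of half-size $\ge 2$ contributes at least as many crossings as its half-size, and in fact every connected component on $(\ast\circ)^m$ with $m\ge 2$ has $\ge m\ge 2$ crossings, while $m=1$ contributes $0$; a short argument shows $\ell\le k$ once one discards the trivial blocks, and the singleton blocks are harmless), matching the index set of \eqref{decomposition}. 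Summing the constant weight over each group and reindexing by $\beta$ gives \eqref{decomposition}.

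The main obstacle is step (ii): verifying that the Kreweras enumeration of non-crossing partitions by block-size multiset is the correct and complete combinatorial coefficient, and that the passage from ``set partition with a given block-size multiset'' to ``$\beta$ = multiset of (size, crossing) pairs'' introduces no over- or under-counting. The safest route is to cite the Kreweras formula for the number of non-crossing partitions of $[m]$ with a given block-type (which is standard, e.g.\ in \cite{NicaSpeicher}), observe that in \eqref{decompositionEq2} the ordered list $(V_1,\ldots,V_\ell)$ is really an unordered set partition together with a function assigning $k_i$ to $V_i$, and then check that summing $b_{n_1,k_1}\cdots b_{n_\ell,k_\ell}$ over all such assignments compatible with $k=\sum k_i$ reproduces, after collecting equal contributions, precisely the multiset sum in \eqref{decomposition} with the stated coefficient. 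Everything else is bookkeeping: the $\Phi_i(\beta)!$ corrects for permuting equal blocks, and $(2n+1-\ell)!$ with $(2n)!$ in the numerator is exactly the shape of the Kreweras count.
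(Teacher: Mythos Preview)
Your proposal is essentially the paper's own argument: reorganize \eqref{decompositionEq2} by grouping terms according to the multiset $\beta$ of (half-block-size, crossing-count) pairs, and supply the multiplicity as the Kreweras count of non-crossing partitions of $[2n]$ with prescribed block type, which the paper cites as Corollary~9.13 of \cite{NicaSpeicher}. The bookkeeping you flag in step~(ii) is precisely what the paper dispatches in one sentence, and your step~(iii) discussion of the bound $\ell\le k$ is a side issue---the paper's proof actually ranges $\ell$ over $\{1,\ldots,n\}$, so you need not justify that inequality.
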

\begin{proof}
Based on the decomposition based on non-crossing partitions introduced in Section~\ref{connectedSec}, given any parity-reversing pairing on $(\ast\circ)^n$ with $k$ crossings, the pairing decomposes uniquely into $\ell$ connected components, for some  positive integer $\ell\leq n$, of sizes $n_1,\ldots,n_\ell$ with $k_1,\ldots,k_\ell$ crossings respectively. An expression for $r_{n,k}$ is therefore obtained by considering all choices of $\ell\in\{1,\ldots,n\}$, all pairs $(n_1,k_1),\ldots,(n_\ell,k_\ell)\in \N\times \N$ with $n_1+\ldots+n_\ell= n$ and $k_1+\ldots+k_\ell=k$, all choices of the corresponding connected components, and, finally, all the ways of ``assembling'' a parity-reversing pairing on $(\ast\circ)^n$ from the corresponding connected components. The choice of connected components is realized as a product of $b_{n_1,k_1}\ldots b_{n_\ell,k_\ell}$, while the multiplicity associated with arranging the components into a larger pairing is that of partitioning the set $[n]$ into non-crossing parts of sizes $n_1,\ldots,n_\ell$. It is well known (e.g. Corollary 9.13 in \cite{NicaSpeicher}) that for any positive integer $n$ and any  $r_1, \ldots , r_n\in \N \cup \{0\}$ such that $r_1 + 2r_2 + \ldots + nr_n = n$, the number of partitions $\pi\in NC(n)$ that have $r_1$ 1-blocks, $r_2$ 2-blocks, \ldots, $r_n$ $n$-blocks is given by
$$\frac{n!}{r_1!r_2!\ldots r_n!(n+1-(r_1+r_2+\ldots+r_n))!}.$$
Counting the multiplicities of each $1$-block, $2$-block, and so on, among $n_1,\ldots,n_\ell$ yields the multiplier in (\ref{decomposition}).
\end{proof}

\vspace{10pt}

The combinatorial insight derived from Proposition~\ref{prop pairings} and enabling the characterization of $||c_q||$ via the above decomposition hinges on the fact that for fixed $k\in \N$, $b_{n,k}=0$ for all $n>k$. Therefore, for all $n$, the expression (\ref{decomposition}) for $r_{n,k}$ is a sum of products of multinomial factors in $n$ and coefficients $b_{m,\ell}$, where $m\leq \ell\leq k$. In particular, for fixed $k$, the subscripts $m$ and $\ell$ do not depend on $n$.
It follows that based on the knowledge of $b_{m,\ell}$ for $1\leq\ell \leq {k\choose 2}$ and $0\leq m\leq k$, obtained via an oracle or direct enumeration, entirely determines the values of $r_{n,k}$ \emph{for all $n$}.

Table~\ref{table B} displays the non-zero values of $b_{n,k}$ for $k\leq 11$ computed by direct enumeration.\footnote{Since parity-reversing pairings on $(\ast\circ)^n$ are in bijective correspondence with the permutations on $[n]$, Table~\ref{table B} was generated in Matlab by listing the $n!$ permutations and counting the crossings as in Remark~\ref{remarkcross}.} These values are used in the following proposition to obtain closed-form expressions for $r_{n,3},\,r_{n,4},\ldots,\,r_{n,11}$, which are in turn employed in Section~\ref{derivative} to derive several analytic properties of $||c_q||$. 

For example, to obtain that $r_{n,3}={{2n}\choose {n-3}}$, note that $3={3\choose 2}$ and thus $b_{n,3}=0$ for all $n\neq 3$. It follows that the only terms contributing to the sum in (\ref{decomposition}) are those where $b_{n_i,k_i}\in\{b_{1,0},b_{3,3}\}$ for all $i=1,\ldots,l$. But, since $k=3$, there is only one such contributing term, corresponding to $\ell=n-2$ (i.e. one factor $b_{3,3}$ and $n-3$ factors $b_{1,0}$). Its contribution is given by $\frac{(2n)!}{(n-3)!(2n+1-(n-2))!}={2n\choose n-3}$ and result then follows from Table~\ref{table B}, noticing that $b_{1,0}=b_{3,3}=1$.

\begin{table}[tbp]
\begin{tabular}{|r|c|c|c|c|c|c|c|c|c|c|c|}
\hline
&\textbf{n=1}&\textbf{2}&\textbf{3}&\textbf{4}&\textbf{5}&\textbf{6}&\textbf{7}&\textbf{8}&\textbf{9}&\textbf{10}&\textbf{11}\\\hline
\textbf{k=0}&1&0&0&0&0&0&0&0&0&0&0\\\hline
\textbf{k=3}&0&0&1&0&0&0&0&0&0&0&0\\\hline
\textbf{4}&0&0&0&2&0&0&0&0&0&0&0\\\hline
\textbf{5}&0&0&0&0&2&0&0&0&0&0&0\\\hline
\textbf{6}&0&0&0&0&5&2&0&0&0&0&0\\\hline
\textbf{7}&0&0&0&0&5&24&2&0&0&0&0\\\hline
\textbf{8}&0&0&0&0&0&18&56&2&0&0&0\\\hline
\textbf{9}&0&0&0&0&0&4&70&176&2&0&0\\\hline
\textbf{10}&0&0&0&0&1&12&98&328&576&2&0\\\hline
\textbf{11}&0&0&0&0&0&12&105&408&1107&300&2\\\hline
\end{tabular}

\caption{$b_{n,k}$ for $0\leq n,k\leq 11$. The omitted rows corresponding to $k=1,2$ are null by Proposition~\ref{propnocross}.}\label{table B}
\end{table}

\break

 \begin{lem} The following expressions, valid for $n\in\mathbb N$ under the convention that ${{2n}\choose {n-\ell}}=0$ for $\ell\geq n$,  count the number of parity-reversing pairings on $(\ast\circ)^n$ containing between $3$ and $11$ crossings, respectively.

{\footnotesize \begin{eqnarray*}\displaystyle r_{n,3}&=&{{2n}\choose {n-3}}\\
r_{n,4}&=&2{{2n}\choose {n-4}}\\
r_{n,5}&=&2{{2n}\choose {n-5}}\\
r_{n,6}&=&5{{2n}\choose {n-5}}+2{{2n}\choose {n-6}}+\frac{n+6}{2}{{2n}\choose {n-6}}\\
r_{n,7}&=&5{{2n}\choose {n-5}}+24{{2n}\choose {n-6}}+2{{2n}\choose {n-7}}+2(n+7){{2n}\choose {n-7}}\\
r_{n,8}&=&18{{2n}\choose {n-6}}+56{{2n}\choose {n-7}}+2{{2n}\choose {n-8}}+4(n+8){{2n}\choose {n-8}}\\
r_{n,9}&=&4{{2n}\choose {n-6}}+70{{2n}\choose {n-7}}+176{{2n}\choose {n-8}}+2{{2n}\choose {n-9}}+5(n+8){{2n}\choose {n-8}}+6(n+9){{2n}\choose {n-9}}\\&+&\frac{1}{6}(n+9)(n+8){{2n}\choose {n-9}}\\
r_{n,10}&=&{{2n}\choose {n-5}}+12{{2n}\choose {n-6}}+98{{2n}\choose {n-7}}+328{{2n}\choose {n-8}}+576{{2n}\choose {n-9}}+2{{2n}\choose {n-10}}+5(n+8){{2n}\choose {n-8}}\\&+&34(n+9){{2n}\choose {n-9}}+8(n+10){{2n}\choose {n-10}}+(n+10)(n+9){{2n}\choose {n-10}}\\
r_{n,11}&=&12{{2n}\choose {n-6}}+105{{2n}\choose {n-7}}+408{{2n}\choose {n-8}}+1107{{2n}\choose {n-9}}+300{{2n}\choose {n-10}}+2{{2n}\choose {n-11}}+28(n+9){{2n}\choose {n-9}}\\&+&109(n+10){{2n}\choose {n-10}}+10(n+11){{2n}\choose {n-11}}+3(n+11)(n+10){{2n}\choose {n-11}}\end{eqnarray*}}
\label{expressionsLem}
\end{lem}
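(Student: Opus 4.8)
The plan is to derive each of the nine expressions $r_{n,3},\ldots,r_{n,11}$ directly from the decomposition formula (\ref{decomposition}) of Lemma~\ref{P}, using the table of connected-diagram counts $b_{n,k}$ (Table~\ref{table B}) together with the crucial vanishing fact $b_{n,k}=0$ for $n>k$ established via Proposition~\ref{prop pairings}(3). For a fixed target crossing number $k$, the sum in (\ref{decomposition}) ranges over multisets $\beta=\{(n_1,k_1),\ldots,(n_\ell,k_\ell)\}$ with $\sum n_i=n$ and $\sum k_i=k$; since each factor $b_{n_i,k_i}$ forces $n_i\le k_i$ and since $\sum k_i=k$, the components with $k_i\ge 1$ contribute a bounded total ``size'' $\sum_{k_i\ge1} n_i\le k$, while every remaining component must be a single unpaired point contributing the factor $b_{1,0}=1$. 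Thus each $r_{n,k}$ is a finite sum — independent of $n$ in its structure — of terms of the form (multinomial multiplier) $\times$ (product of table entries $b_{n_i,k_i}$), and the multinomial multiplier, read off from Lemma~\ref{P}, is a ratio of factorials that simplifies to an explicit polynomial-in-$n$ multiple of a single binomial coefficient $\binom{2n}{n-j}$.

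**Concretely, I would** proceed case by case in increasing order of $k$. For each $k\in\{3,\ldots,11\}$, enumerate the ``reduced'' partitions of the excess: that is, all multisets of pairs $(n_i,k_i)$ with $k_i\ge 1$, $1\le n_i\le k_i$, $\sum k_i=k$, and $b_{n_i,k_i}\ne 0$ (the nonzero entries being exactly those in Table~\ref{table B}). For $k=3,4,5$ there is, by the table, only a single nonzero connected diagram of the relevant size (respectively $b_{3,3}=1$, $b_{4,4}=2$, $b_{5,5}=2$), so there is exactly one reduced partition — one such component plus $n-n_i$ singletons — giving $\ell=n-n_i+1$, and the multiplier $\frac{(2n)!}{(n-n_i)!\,(2n+1-\ell)!}=\binom{2n}{n-n_i}$ after noting $2n+1-\ell=n+n_i$; multiplying by $b_{n_i,k_i}$ yields $\binom{2n}{n-3}$, $2\binom{2n}{n-4}$, $2\binom{2n}{n-5}$. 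From $k=6$ onward several reduced partitions appear — e.g.\ for $k=6$: one $(5,6)$-component ($b_{5,6}=5$), one $(6,6)$-component ($b_{6,6}=2$), or two $(3,3)$-components ($b_{3,3}^2=1$) — and one simply sums their contributions; the two-$(3,3)$ case produces the factorial multiplier $\frac{(2n)!}{(n-6)!\,2!\,(2n+1-(n-4))!}=\frac{1}{2}\binom{n+5}{\ ?\ }\cdots$, which one checks equals $\frac{n+6}{2}\binom{2n}{n-6}$, accounting for the non-binomial polynomial terms in the stated formulas. The higher cases $k=7,\ldots,11$ are the same bookkeeping with more reduced partitions (including ones with three components of size $1$ beyond the excess, which is why quadratic-in-$n$ prefactors such as $\frac16(n+9)(n+8)$ and $(n+10)(n+9)$ appear), and I would tabulate each contribution and collect terms.

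**The main obstacle** is purely one of careful bookkeeping rather than conceptual difficulty: for $k=9,10,11$ the number of reduced partitions — and hence of terms — grows, and one must (i) correctly list \emph{every} multiset of nonzero $b_{n_i,k_i}$ summing to $k$ in the second index, remembering that components with $k_i=0$ are forced to be singletons $(1,0)$, (ii) count the multiplicities $\Phi_i(\beta)$ of repeated block sizes correctly — in particular singletons are abundant, so $\Phi_1(\beta)$ is typically $n$ minus the total size of the non-trivial components — and (iii) simplify the factorial multiplier $\frac{(2n)!}{\prod_i\Phi_i(\beta)!\,(2n+1-\ell)!}$ to the displayed polynomial-times-$\binom{2n}{n-j}$ form, using $2n+1-\ell=n+(\text{size of non-trivial part})+(\text{number of non-trivial components})-1$ and absorbing the $\Phi_1!$ from the singletons into telescoping the factorials. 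Since each such simplification is elementary and finite, and every ingredient ($b_{n,k}$ values, the multiplier formula, and the vanishing $b_{n,k}=0$ for $n>k$) is already in hand, the proof amounts to executing this enumeration for each $k$ and verifying the collected coefficients match the claimed expressions; I would present one or two representative cases (say $k=3$ and $k=6$ or $k=9$) in full and indicate that the rest follow identically.
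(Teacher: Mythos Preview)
Your proposal is correct and follows essentially the same approach as the paper: the paper derives these formulas by applying the decomposition formula (\ref{decomposition}) of Lemma~\ref{P}, using the vanishing $b_{n,k}=0$ for $n>k$ from Proposition~\ref{prop pairings}(3) and the computed values of $b_{n,k}$ from Table~\ref{table B}, and it works out the case $k=3$ explicitly (one $b_{3,3}$ factor and $n-3$ singleton factors, yielding $\frac{(2n)!}{(n-3)!(n+3)!}=\binom{2n}{n-3}$) while leaving the remaining cases to analogous bookkeeping. Your proposal is in fact more explicit than the paper about how the polynomial-in-$n$ prefactors arise from repeated block sizes, but the method is identical.
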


\section{Analytic properties of $||c_q||$}
\label{derivative}

We now make use of the combinatorial results of the previous section to characterize the convergence of the $2n$-norms of the $q$-circular operator $c_q$.  Recalling the notation, let $\lambda_n:\R\to\R$ denote a real-valued function in $q$ given by
\begin{equation}\lambda_n(q)=\left(\sum_{k=0}^{n\choose 2} p_{n,k}q^k\right)^\frac{1}{2n},\label{momentsEq}\end{equation}
where $p_{n,k}$ again denotes the number of parity-reversing pairings on $(\ast\circ)^n$ with $k$ crossings. 
By Definition~\ref{defMoments}, for $q\in(-1,1)$, $\lambda_n(q)$ is the $2n$-norm of $c_q$ and by (\ref{normc}),  
\begin{equation}||c_q||=\lim_{n\to\infty}\lambda_n(q).\label{normcqEq}\end{equation}
Since for all $n\in \mathbb N$, $\lambda_n$ is (real) analytic on $(-1,1)$, it admits an analytic extension $\tilde \lambda_n$ onto some domain $\Omega$ in the complex plane. It is not clear whether the limit $||c_q||$, viewed as a function in $q$ on $(-1,1)$ is analytic, but, in the affirmative, its analytic extension onto a complex domain would allow the problem of the norm to be framed within a significantly richer theory.
A priori, one may hope that point-wise convergent analytic extensions of the sequence $\lambda_n$ will converge to the analytic extension of $||c_q||$. Surprisingly, this does not end up being the case and the underlying argument is the focus of the present section.

To rephrase Theorem~\ref{thmbad}, take the sequence of analytic extensions $\tilde \lambda_1,\tilde \lambda_2,\ldots$ of $\lambda_1,\lambda_2,\ldots$ and suppose that they converge point-wise on some fixed complex neighborhood of $(-1,1)$ to a function that we would hope is the analytic extension of $||c_q||$. Unfortunately, it will turn out that either the domains of analyticity of $\tilde \lambda_1,\tilde \lambda_2,\ldots$  will shrink to a single point or, otherwise, the convergence will fail to be uniform on any neighborhood of the origin. Either way, it is no longer clear whether the analytic extension of the operator norm of $c_q$ can be understood via the analytic extensions of its $2n$-norms.

More concretely, since
$$\tilde \lambda_n(q)=\left(\sum_{k=0}^{n\choose 2} p_{n,k}q^k\right)^\frac{1}{2n},\quad q\in\mathbb C,$$
the domain of analyticity of $\lambda_n(q)$ is delineated by the least-magnitude root of the polynomial $\sum_{k=0}^{n\choose 2} p_{n,k}q^k$. Unfortunately, visualizing the corresponding domains of analyticity requires the knowledge of $p_{n,k}$ for large $n$ and all $k=0,\ldots,{n\choose 2}$. Extending Table~\ref{table B} to $n=12$ and $k=0,\ldots,66$ allows one to compute the roots of $\tilde\lambda_3,\ldots,\tilde\lambda_{12}$. The least-magnitude root for each of the $\tilde\lambda_n$ is depicted in Figure~\ref{figRoots}. Though the roots appear to form a decreasing sequence, the data points are relatively few and it is therefore not clear which of two the behaviors identified in Theorem~\ref{thmbad} is actually taking place.

\begin{figure}[tbph]
\includegraphics[scale=1]{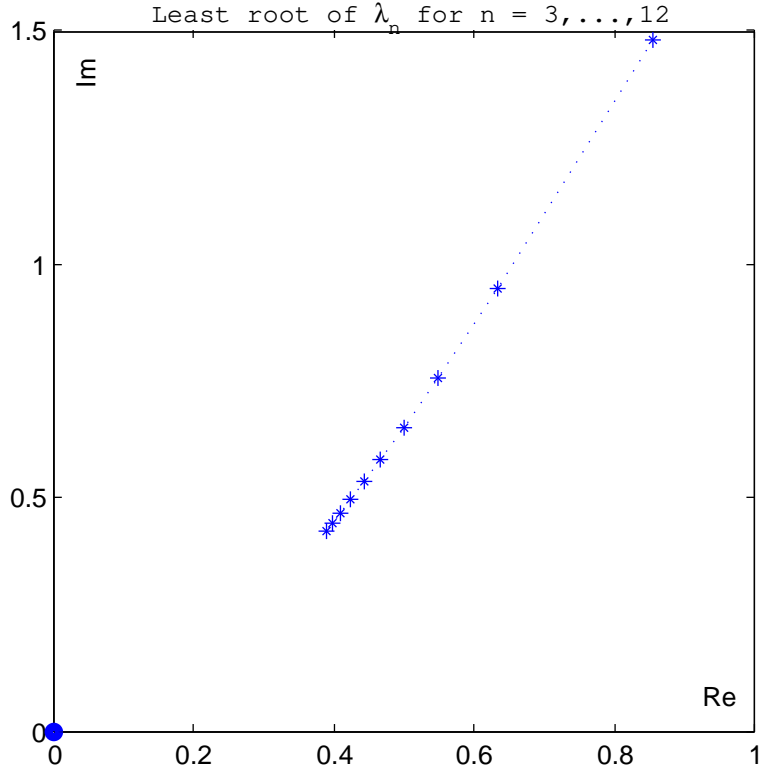}\centering
\caption{The least-magnitude roots of $\tilde \lambda_3,\ldots,\tilde \lambda_{12}$ (the conjugate roots in the fourth quadrant not displayed). It is to be noted that the magnitude of roots in this range decreases monotonically.}
\label{figRoots}
\end{figure}

The proof of Theorem~\ref{thmbad} hinges on the combinatorics of parity-reversing pairings developed in the previous section in conjuction with the Taylor series expansions of $\lambda_n$ developed in the following Lemmas~\ref{PS} and \ref{lemma11}.
 
\begin{lem} For all $n\geq 1$, the function $\lambda_n$ is infinitely differentiable on $(-1,1)$. Moreover, letting $\{a_k^{(n)}\}_{k\geq 0}$ denote the coefficients in the Taylor series expansion of $\lambda_n$ about the origin, we have
$$a_0=(p_{n,0})^\frac{1}{2n}$$ and for $k\geq 1$,
\begin{eqnarray}
a_k^{(n)}&=&(p_{n,0})^{\frac{1}{2n}}\sum_{\substack{\ell_1,\ell_2,\ldots,\ell_k\in \{0,1,\ldots,k\}\\\ell_1+2\ell_2+\ldots+k\ell_k=k}}\!\!\!\frac{\frac{1}{2n}!}{(\frac{1}{2n}-(\ell_1+\ldots+\ell_k))!}\frac{1}{\ell_1!\ldots \ell_k!}\left(\frac{p_{n,1}}{p_{n,0}}\right)^{\ell_1}\ldots \left(\frac{p_{n,k}}{p_{n,0}}\right)^{\ell_k}.\label{thm2}
\end{eqnarray}
\label{PS}
\end{lem}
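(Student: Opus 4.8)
The plan is to recognize $\lambda_n$ as a composition $\lambda_n = g \circ f$ where $f(q) = \sum_{k=0}^{\binom{n}{2}} p_{n,k} q^k$ is a polynomial (hence entire, and in particular $C^\infty$ on $(-1,1)$) and $g(x) = x^{1/(2n)}$. Since $f(0) = p_{n,0} = \frac{1}{n+1}\binom{2n}{n} > 0$ by Proposition~\ref{propnocross}, and $f$ is continuous, $f$ stays strictly positive on a neighborhood of the origin; on that neighborhood $g$ is real-analytic (the principal branch of the $(2n)$th root is analytic away from the negative reals and $0$). Therefore $\lambda_n$ is real-analytic — in particular $C^\infty$ — on a neighborhood of $0$, and the first assertion about differentiability on all of $(-1,1)$ follows because $f$ has no real root in $(-1,1)$: indeed for $q\in[0,1)$ all $p_{n,k}\geq 0$ with $p_{n,0}>0$ so $f(q)>0$, and for $q\in(-1,0)$ one can invoke positivity of the corresponding moment $\phi((c^*c)^n) = f(q)$ as a genuine norm-to-the-$2n$ quantity, or more elementarily bound $f(q) \geq p_{n,0} - \sum_{k\geq 3} p_{n,k}|q|^k$ and note this is the same positive quantity appearing in the combinatorial norm, which is at least $\|c_0\|^{2n}/(\text{something})>0$; in any case $f>0$ on $(-1,1)$ and $\lambda_n = f^{1/(2n)}$ is a composition of real-analytic functions there.

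For the Taylor coefficients, first I would write $f(q) = p_{n,0}\bigl(1 + u(q)\bigr)$ where $u(q) = \sum_{k\geq 1} \frac{p_{n,k}}{p_{n,0}} q^k$ has $u(0)=0$. Then
$$\lambda_n(q) = p_{n,0}^{1/(2n)} (1+u(q))^{1/(2n)} = p_{n,0}^{1/(2n)} \sum_{m\geq 0} \binom{1/(2n)}{m} u(q)^m,$$
using the generalized binomial series, which converges for $|u(q)|<1$, hence on a neighborhood of $0$. Expanding $u(q)^m$ multinomially, the coefficient of $q^k$ receives a contribution from each way of choosing nonnegative integers $\ell_1,\dots,\ell_k$ with $\ell_1 + \ell_2 + \cdots + \ell_k = m$ and $1\cdot\ell_1 + 2\cdot\ell_2 + \cdots + k\cdot\ell_k = k$ — the first constraint fixing the power $m$ and the second picking out the $q^k$ term — with multinomial weight $\frac{m!}{\ell_1!\cdots\ell_k!}\prod_{j}(p_{n,j}/p_{n,0})^{\ell_j}$. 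Substituting $m = \ell_1 + \cdots + \ell_k$ into $\binom{1/(2n)}{m} = \frac{(1/(2n))!}{m!\,(1/(2n)-m)!}$ (interpreting the factorials via the Gamma function / falling-factorial convention used in the statement), the $m!$ cancels against the $m!$ in the multinomial coefficient, and one is left with exactly the summand displayed in~(\ref{thm2}). Summing over all admissible $(\ell_1,\dots,\ell_k)$ gives $a_k^{(n)}$.

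The main technical point — really the only one that needs care — is justifying that the formal rearrangement of $\sum_m \binom{1/(2n)}{m} u(q)^m$ into a power series in $q$ is legitimate, i.e. that the double sum converges absolutely on a neighborhood of the origin so that Fubini applies and the $q^k$-coefficient is genuinely the finite sum over $\{\ell_1+2\ell_2+\cdots+k\ell_k=k\}$. This is standard: choosing $\rho>0$ small enough that $\sum_{k\geq 1}\frac{p_{n,k}}{p_{n,0}}\rho^k < 1$, the series $\sum_m |\binom{1/(2n)}{m}| \bigl(\sum_k \frac{p_{n,k}}{p_{n,0}}\rho^k\bigr)^m$ converges, which dominates the rearranged double series for $|q|\leq\rho$; and since the set of $(\ell_1,\dots,\ell_k)$ with $\ell_1+2\ell_2+\cdots+k\ell_k=k$ is finite (each $\ell_j \leq k$, and indeed $\ell_j=0$ for the terms forced by $p_{n,1}=p_{n,2}=0$), the coefficient of $q^k$ is the stated finite sum. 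I would also remark that $p_{n,1}=p_{n,2}=0$ (Proposition~\ref{propnocross}) means every term with $\ell_1\neq 0$ or $\ell_2\neq 0$ vanishes, so the sum is in fact supported on partitions of $k$ into parts of size $\geq 3$ — a simplification worth noting for the subsequent use of the lemma, though not needed for its statement.
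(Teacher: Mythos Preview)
Your proof is correct and follows essentially the same route as the paper: both arguments recognize $\lambda_n$ as the composition of the polynomial $f(q)=\sum_k p_{n,k}q^k$ with the $(2n)$th-root map, establish $f>0$ on $(-1,1)$ via the $C^\ast$-probability positivity $\phi((c_q^\ast c_q)^n)>0$, and then extract the Taylor coefficients of the composition. The only cosmetic difference is that the paper invokes Fa\`a di Bruno's formula for $D^k(g\circ f)$ and then evaluates at $0$, whereas you expand $(1+u)^{1/(2n)}$ by the generalized binomial series and then expand $u^m$ multinomially --- these are two packagings of the same identity, and both land on exactly the displayed sum over $\{\ell_1+2\ell_2+\cdots+k\ell_k=k\}$.

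One small loose end: your ``more elementary'' alternative for positivity on $(-1,0)$, namely $f(q)\geq p_{n,0}-\sum_{k\geq 3}p_{n,k}|q|^k$, does not actually give anything (the right-hand side can be negative for $|q|$ near $1$, and the appeal to ``the combinatorial norm'' is circular). But this is harmless, since the $C^\ast$-positivity argument you give first is the correct one and is precisely what the paper uses.
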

\begin{proof}
Recall first that letting $f,g:\R\to\R$ be two functions such that $D^k f, D^kg$ exist for all $k=1,\ldots,n$ given some $n\in Z_+$, an extension of the chain rule to the $k^\text{th}$ derivative of $f\circ g$ (e.g. \cite{Stanley1997v2}, Chapter 5) can be written as follows
{\footnotesize$$D^n (g\circ f)
=\sum_{\substack{m_1,m_2,\ldots,m_n\in \{0,1,\ldots,n\}\\m_1+2m_2+\ldots+nm_n=n}} \frac{n!}{m_1!\,m_2!\,\cdots\,m_n!}\;
(D^{m_1+\cdots+m_n}g) \circ f\;\times\;
\prod_{j=1}^n\left(\frac{D^jf}{j!}\right)^{m_j}.$$}
Specializing the above result, suppose that $f:M\to\R$ is a smooth function that does not vanish on some open subset $M\subset \R$ and let $g(x)=x^b$. Then, for any $b\in \R$, $f^b$ is smooth.  Moreover, for $b\neq 0$,
$$(D^{m_1+\cdots+m_n}g)\circ f= \frac{b!}{(b-(\ell_1+\ldots+\ell_k))!} f^{-(\ell_1+\ldots+\ell_k)}$$
and, therefore,
{\footnotesize$$D^k f^b=f^b\sum_{\substack{\ell_1,\ell_2,\ldots,\ell_k\in \{0,1,\ldots,k\}\\\ell_1+2\ell_2+\ldots+k\ell_k=k}}\!\!\!\frac{k!}{\ell_1!\ldots \ell_k!}\frac{b!}{(b-(\ell_1+\ldots+\ell_k))!}\left(\frac{D^1f}{1!f}\right)^{\ell_1}\;\left(\frac{D^2f}{2!f}\right)^{\ell_2}\;\ldots \left(\frac{D^kf}{k!f}\right)^{\ell_k},$$}
for all $k\in\Z_+$, where $\frac{b!}{(b-(\ell_1+\ldots+\ell_k))!}:=b\,(b_1)\ldots(b+1-(\ell_1+\ldots+\ell_k))$.

Recall now that given a $C^\ast$ probability space $(\mathcal A,\phi)$ of Definition~\ref{Cast}, for any non-zero element $a\in\mathcal A$, $\phi((a^\ast a)^n)>0$ for all $n\geq 1$. Since $\lambda_n(q)=(\phi((c_q^\ast c_q)^n))^{1/(2n)}$ and since $c_q$ can be represented an element of some $C^\ast$ probability space for all $q\in(-1,1)$, it follows that $\lambda_n>0$ on $(-1,1)$. The previous discussion yields that $\lambda_n$ is smooth and {\footnotesize $$
D^k \lambda_n =\sum_{\substack{\ell_1,\ell_2,\ldots,\ell_k\in \{0,1,\ldots,k\}\\\ell_1+2\ell_2+\ldots+k\ell_k=k}}\!\!\!\frac{k!}{\ell_1!\ldots \ell_k!}\frac{\frac{1}{2n}!}{(\frac{1}{2n}-(\ell_1+\ldots+\ell_k))!}(\lambda_n)^{\frac{1}{2n}}\;\left(\frac{D^1\lambda_n}{1!\lambda_n}\right)^{\ell_1}\ldots \left(\frac{D^k\lambda_n}{k!\lambda_n}\right)^{\ell_k}$$}
for all $k\in\Z_+$. But note that for any $m\in Z_+$, we have $D^{m}\lambda_n(0)=m!\,p_{n,m}$. Thus,
{\footnotesize\begin{eqnarray*}a_k^{(n)}&=&\frac{D^k \lambda_n(0)}{k!}\\&=&(p_{n,0})^{\frac{1}{2n}}\sum_{\substack{\ell_1,\ell_2,\ldots,\ell_k\in \{0,1,\ldots,k\}\\\ell_1+2\ell_2+\ldots+k\ell_k=k}}\!\!\!\frac{1}{\ell_1!\ldots \ell_k!}\frac{\frac{1}{2n}!}{(\frac{1}{2n}-(\ell_1+\ldots+\ell_k))!}\left(\frac{p_{n,1}}{p_{n,0}}\right)^{\ell_1}\ldots \left(\frac{p_{n,k}}{p_{n,0}}\right)^{\ell_k}.\end{eqnarray*}}
\end{proof}

More concretely, below is the expansion of the first ten coefficients of the Taylor series associated \mbox{with $\lambda_n$}. Combining this result with the expressions of Lemma~\ref{expressionsLem} for the combinatorial sequences $p_{n,k}$ yields the surprising asymptotics of Lemma~\ref{lemma11}.

{\footnotesize \begin{eqnarray}
\lambda_n(q)&=&(p_{n,0})^\frac{1}{2n}+(p_{n,0})^\frac{1}{2n}\frac{1}{2}\,\frac{p_{n,3}}{np_{n,0}}q^3 + (p_{n,0})^\frac{1}{2n}\frac{1}{2}\,\frac{p_{n,4}}{np_{n,0}}q^4
+(p_{n,0})^\frac{1}{2n}\frac{1}{2}\,\frac{p_{n,5}}{np_{n,0}}q^5\\\nonumber & +&  (p_{n,0})^{1\over{2n}}\left({1\over 2}{p_{n,6}\over{np_{n,0}}}+\left({1\over{2n}}-1\right){1\over 4}{(p_{n,3})^2\over{n(p_{n,0})^2}}\right)q^6
+(p_{n,0})^\frac{1}{2n}\left( \frac{1}{2}\,\frac {p_{n,7}}{np_{n,0}}+\left({1\over{2n}}-1\right)\frac{1}{2}\,\frac {p_{n,3}p_{n,4}}{n(p_{n,0})^2}\right){q}^{7}\\\nonumber &+& (p_{n,0})^\frac{1}{2n}\left(\frac{1}{2}\,\frac {p_{n,8}}{np_{n,0}}+ \left({1\over{2n}}-1\right)\frac{1}{4}\,\frac {(p_{n,4})^2}{n(p_{n,0})^2}+\right.
+\left. \left({1\over{2n}}-1\right)\frac{1}{2}\,\frac {p_{n,3}p_{n,5}}{n(p_{n,0})^2}\right){q}^{8}\\\nonumber
&+&(p_{n,0})^\frac{1}{2n}\left(\frac{1}{2}\,\frac {p_{n,9}}{np_{n,0}}+ \left({1\over{2n}}-1\right)\frac{1}{2}\,\frac {p_{n,4}p_{n,5}}{n(p_{n,0})^2}+\left({1\over{2n}}-1\right)\frac{1}{2}\,\frac {p_{n,3}p_{n,6}}{n(p_{n,0})^2}+\left({1\over{2n}}-2\right)\left({1\over{2n}}-1\right)\frac{1}{12}\,\frac {(p_{n,3})^3}{n(p_{n,0})^3}\right)q^{9}\\\nonumber
&+& O(q^{10})
\label{maple expansion}
\end{eqnarray}}

\begin{lem} For $k\in\{0,\ldots,10\}$, the sequence $a_{k}^{(n)}$ converges to a limit in $\R$ as $n\to\infty$. However, $$a_{11}^{(n)}\sim -5 n.$$\label{lemma11}\end{lem}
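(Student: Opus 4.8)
The plan is to use the explicit formula for $a_k^{(n)}$ from Lemma~\ref{PS} together with the closed-form expressions for $p_{n,k}=r_{n,k}$, $k\le 11$, from Lemma~\ref{expressionsLem}, and to track the asymptotics in $n$ term by term. The key structural observation is that each $a_k^{(n)}$ is a finite sum, indexed by the partitions $\ell_1+2\ell_2+\cdots+k\ell_k=k$, of terms of the shape
$$(p_{n,0})^{1/(2n)}\;\frac{\frac{1}{2n}!}{(\frac{1}{2n}-L)!}\;\frac{1}{\ell_1!\cdots\ell_k!}\;\Big(\tfrac{p_{n,1}}{p_{n,0}}\Big)^{\ell_1}\cdots\Big(\tfrac{p_{n,k}}{p_{n,0}}\Big)^{\ell_k},$$
where $L=\ell_1+\cdots+\ell_k$. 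First I would record that $(p_{n,0})^{1/(2n)}=\big(\tfrac{1}{n+1}\binom{2n}{n}\big)^{1/(2n)}\to 2$ and that $\frac{1}{2n}!/(\frac1{2n}-L)!=\tfrac1{2n}(\tfrac1{2n}-1)\cdots(\tfrac1{2n}-L+1)\sim \tfrac1{2n}\cdot(-1)^{L-1}(L-1)!$ as $n\to\infty$, so this prefactor contributes a $\Theta(1/n)$ factor whenever $L\ge 1$ and is $(p_{n,0})^{1/(2n)}\to2$ when $L=0$ (which only occurs for $k=0$). Since $p_{n,1}=p_{n,2}=0$ (Proposition~\ref{propnocross}), any partition using $\ell_1>0$ or $\ell_2>0$ contributes nothing, so the sum effectively ranges over partitions of $k$ into parts $\ge 3$.

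Next I would determine, for each $k$, the growth in $n$ of the ratio $p_{n,j}/p_{n,0}$. From Lemma~\ref{expressionsLem} and the convention $\binom{2n}{n-\ell}=0$ for $\ell\ge n$, one has $p_{n,j}/p_{n,0}=(n+1)\binom{2n}{n-\ell}/\binom{2n}{n}\cdot(\text{polynomial in }n)$; using $\binom{2n}{n-\ell}/\binom{2n}{n}\to 1$ as $n\to\infty$ for each fixed $\ell$, one sees $p_{n,j}/p_{n,0}$ grows like $n$ times the leading polynomial coefficient appearing in $r_{n,j}$. Concretely $p_{n,j}/p_{n,0}\sim c_j n$ for $3\le j\le 8$ with $c_j$ the sum of the constant coefficients in $r_{n,j}$ (e.g.\ $c_3=1$, $c_4=2$, $c_5=2$, $c_6=7$,\ldots), while for $j=9,10,11$ the presence of the quadratic terms $\tfrac16(n+9)(n+8)\binom{2n}{n-9}$ etc.\ makes $p_{n,9}/p_{n,0}\sim \tfrac16 n^2$, $p_{n,10}/p_{n,0}\sim n^2$, $p_{n,11}/p_{n,0}\sim 3n^2$. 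Combining a monomial $\prod(p_{n,j}/p_{n,0})^{\ell_j}$ with the $\Theta(n^{-1})$ prefactor (for $L\ge1$): if every $j$ with $\ell_j>0$ satisfies $j\le 8$, the term is $\Theta(n^{L-1})$; the only way to reach order $n$ is $L=2$, i.e.\ $k=j_1+j_2$ with $j_1,j_2\in\{3,\ldots,8\}$ — but this forces $k\le 16$, and one checks for $k\le 10$ every such contribution is $\Theta(1)$ (when $L\le 1$) or, for $L=2$ with both parts $\le 5$, still $\Theta(n)$ only when... here is the crucial bookkeeping point: for $k\le 10$ the $\Theta(n)$ contributions from $L=2$ partitions must \emph{cancel}, and one must verify this cancellation, whereas for $k=11$ they do not.

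The heart of the argument, and the step I expect to be the main obstacle, is the precise cancellation analysis at $k=11$. The relevant partitions of $11$ into parts $\ge 3$ are $(11)$, $(8,3)$, $(7,4)$, $(6,5)$, $(5,3,3)$, $(4,4,3)$, $(3,3,5)$,\ldots\ — concretely the singleton $\{11\}$, the three two-part partitions $8{+}3,\,7{+}4,\,6{+}5$, and the three-part partitions $5{+}3{+}3$ and $4{+}4{+}3$. The singleton $\{11\}$ gives $L=1$ and contributes $(p_{n,0})^{1/(2n)}\cdot\tfrac{1}{2n}\cdot\tfrac{p_{n,11}}{p_{n,0}}\sim 2\cdot\tfrac1{2n}\cdot 3n^2=3n$ — wait, this is order $n^2\cdot n^{-1}=n$, so it contributes $\sim 3n$; similarly the two-part partitions have $L=2$, prefactor $\sim 2\cdot\tfrac1{2n}(\tfrac1{2n}-1)\sim -\tfrac1{2n}\cdot 2=-\tfrac1n$ (times $2$ from $(p_{n,0})^{1/(2n)}$), multiplied by $\tfrac{p_{n,j_1}}{p_{n,0}}\tfrac{p_{n,j_2}}{p_{n,0}}\sim c_{j_1}c_{j_2}n^2$, giving $\sim -c_{j_1}c_{j_2}n$; the three-part partitions have $L=3$, prefactor $\Theta(n^{-1})$, monomial $\Theta(n^3)$, giving $\Theta(n^2)$ — so actually I must re-examine whether $p_{n,j}/p_{n,0}$ for small $j$ is really $\Theta(n)$: yes it is, so $5{+}3{+}3$ gives $\Theta(n^3)\cdot\Theta(n^{-1})=\Theta(n^2)$, which would dominate — this means the $\Theta(n^2)$ contributions must themselves cancel, and after they cancel the residual $\Theta(n)$ terms sum to $-5n$. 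I would organize this as: (i) show all $\Theta(n^m)$ contributions for $m\ge 2$ cancel by matching the $L=1$ term $\{11\}$ (whose only $n^2$-part comes from the quadratic piece of $r_{n,11}$) against the higher-$L$ terms; (ii) collect the surviving $\Theta(n)$ coefficient, plug in the exact small constants $c_3=1,c_4=2,c_5=2$ and the exact quadratic/linear coefficients from $r_{n,11}$, and verify the total is $-5$. The bookkeeping is delicate because one needs the \emph{sub-leading} ($\Theta(n)$ rather than $\Theta(n^2)$) behavior of $p_{n,j}/p_{n,0}$, which requires expanding $(n+1)\binom{2n}{n-\ell}/\binom{2n}{n}$ to one more order in $1/n$ — this is where the routine-but-error-prone computation lives, and I would carry it out with the aid of symbolic computation, cross-checking against the displayed expansion of $\lambda_n$ through $q^9$ already given in the excerpt.
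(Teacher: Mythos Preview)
Your overall approach is exactly the paper's: substitute the closed forms of Lemma~\ref{expressionsLem} into the Fa\`a di Bruno expansion of Lemma~\ref{PS}, use $\binom{2n}{n-\ell}/\binom{2n}{n}\to 1$, and track the top power of $n$; the paper organizes this by writing $a_{11}^{(n)}$ explicitly, factoring it as $\tfrac{1}{8n^2}P(n)$ with $P(n)=-40n^3+O(n^2)$, and (in a remark) confirms by Maple.

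However, your hand-bookkeeping of the orders is off by one power of $n$ in several places, and as written it would lead you to look for the wrong cancellations. For $j=6,7,8$ the formulas in Lemma~\ref{expressionsLem} already carry a factor linear in $n$ (e.g.\ $r_{n,6}$ contains $\tfrac{n+6}{2}\binom{2n}{n-6}$), so in fact $p_{n,j}/p_{n,0}\sim d_j\,n^2$ with $d_6=\tfrac12,\ d_7=2,\ d_8=4$, not $c_j n$; and for $j=9,10,11$ the quadratic factors give $p_{n,j}/p_{n,0}\sim e_j\,n^3$ with $e_9=\tfrac16,\ e_{10}=1,\ e_{11}=3$, not $\Theta(n^2)$. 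With the corrected degrees, the singleton $\{11\}$ contributes $2\cdot\tfrac{1}{2n}\cdot 3n^3=3n^2$; each two-part partition $8{+}3,\,7{+}4,\,6{+}5$ contributes at order $n^2$ (one factor $\sim n^2$, the other $\sim n$, prefactor $\sim -1/n$); and the three-part partitions $5{+}3{+}3,\,4{+}4{+}3$ also contribute at order $n^2$. The $n^2$-coefficients then read $3-4-4-1+2+4=0$ --- this is the ``the term in $n^4$ vanished'' of the paper's $P(n)$ --- and the surviving $\Theta(n)$ piece, which requires the sub-leading expansion you correctly flag, sums to $-5n$. Since you already plan to finish by symbolic computation this is a bookkeeping slip rather than a conceptual gap, but your stated orders $p_{n,j}/p_{n,0}\sim c_j n$ for $6\le j\le 8$ and $\sim\Theta(n^2)$ for $9\le j\le 11$ should be corrected before that step.
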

\begin{proof}
By substituting the expressions of Lemma~\ref{expressionsLem} into Lemma~\ref{PS}, the reader may readily verify that the sequence $a_{k}^{(n)}$ converges to a limit for $k\in\{0,\ldots,10\}$. In fact, using Lemma~\ref{P} rather than Lemma~\ref{expressionsLem}, the corresponding limits turn out to be weighted products of $b_{\ell,m}$ terms. For instance,
$$a_{3}^{(n)}=\frac{b_{3,3}}{2n}{{2n}\choose {n-3}}\left(\frac{1}{n+1}{{2n}\choose n}\right)^{\frac{1}{2n}-1}.$$
Noting that 
$$\left(\frac{1}{n+1}{{2n}\choose n}\right)^{\frac{1}{2n}}\to 2\quad\quad\text{and}\quad\quad {{2n}\choose {n-3}}{{2n}\choose n}^{-1}\to 1,$$
one obtains that $a_{3}^{(n)}\to b_{3,3}=1$.

On the other hand, by Lemma~\ref{PS},
{\footnotesize \begin{eqnarray}
a_{11}^{(n)} &=& \left({\frac {{2\,n\choose n}}{n+1}}\right)^{\frac{1}{2n}}
 \frac{n+1}{n} \frac{1}{16 n^2{2\,n\choose n} ^{3}}\, \left( 8\,p_{n,{11}}{n}^{2} {2\,n\choose n}^{2} + 4\,p_{n,{8}}p_{n,{3}}n{2\,n\choose n}+4\,p_{n,{4
}}p_{n,{7}}n{2\,n\choose n}\right.\\\nonumber&+&4\,p_{n,{5}}p_{n,{6}}n{2\,n\choose n}-4\,p
_{n,{8}}p_{n,{3}}{n}^{2}{2\,n\choose n}-8\,p_{n,{8}}p_{n,{3}}{n}^{3}{2\,n
\choose n}-4\,p_{n,{4}}p_{n,{7}}{n}^{2}{2\,n\choose n}-8\,p_{n,{4}}p_{n,{7}}{n
}^{3}{2\,n\choose n}\\\nonumber&-&4\,p_{n,{5}}p_{n,{6}}{n}^{2}{2\,n\choose n}-8\,p_{n,{5}
}p_{n,{6}}{n}^{3}{2\,n\choose n}-3\,{p_{n,{3}}}^{2}p_{n,{5}}{n}^{2}+10\,{p_{n,
{3}}}^{2}p_{n,{5}}{n}^{3}+8\,{p_{n,{3}}}^{2}p_{n,{5}}{n}^{4}\\\nonumber&-&4\,{p_{n,{3}}}^{2
}p_{n,{5}}n-3\,p_{n,{3}}{n}^{2}{p_{n,{4}}}^{2}+10\,p_{n,{3}}{p_{n,{4}}}^{2}{n}^{
3}+8\,p_{n,{3}}{p_{n,{4}}}^{2}{n}^{4}-4\,p_{n,{3}}n{p_{n,{4}}}^{2}+p_{n,{3}}{p_{n,
{4}}}^{2}+{p_{n,{3}}}^
{2}p_{n,{5}}\huge) \label{a11}
\end{eqnarray}}
To compute the asymptotic of the above expression, recall Stirling's formula,
$$n! \sim \sqrt{2 \pi n} \left(\frac{n}{e}\right)^n,$$
from which one readily obtains that for any \emph{fixed} non-negative integer $k$,
$${2\,n\choose n-k}\sim \frac{4^n}{\sqrt{\pi n}}.$$
Revisiting the expressions of Proposition~\ref{expressionsLem}, one obtains that for $3\leq k\leq 11$, 
$r_{n,k}$ is asymptotically given as a product of some polynomial in $n$ and a term that asymptotically equals $4^n/\sqrt{\pi n}$. 
For instance, $r_{n,11}=(3n^2+210n)(4^n/\sqrt{\pi n})(1+o(1))$. But, note that $r_{n,11}$ appears in (\ref{a11}) accompanied by $8n^2{2\,n\choose n}^{2}$, yielding the total contribution of $8n^2(3n^2+210n)(4^n/\sqrt{\pi n})^3(1+o(1))$. In fact, considering the asymptotics of the numerator in (\ref{a11}), it is easy to check that the additional binomial coefficients appear so that each term of the sum contributes an exponential term of $\left(\frac{4^n}{\sqrt{\pi n}}\right)^3$. That is,
$$\left( 8\,p_{n,{11}}{n}^{2} {2\,n\choose n}^{2} + 4\,p_{n,{8}}p_{n,{3}}n{2\,n\choose n}+\ldots+p_{n,{3}}{p_{n,{4}}}^{2}+{p_{n,{3}}}^{2}p_{n,{5}}\right)\sim P(n)\left(\frac{4^n}{\sqrt{\pi n}}\right)^3,$$
where $P(n)$ is some polynomial in $n$ (to be determined next). Thus,
\begin{equation}a_{11}^{(n)} \sim \frac{1}{8 n^2} P(n).\label{eqasymp}\end{equation}
The contribution to $P(n)$ from $8n^2 p_{n,11}$ is therefore $8n^2(3n^2+210n+O(1))$. Similarly, the contribution to $P(n)$ from $4\,p_{n,{8}}p_{n,{3}}n$ is $4n(4n+O(1))$, and so on. Computing the remaining contributions and performing the substitutions yields $P(n)= -40n^3 + O(n^2)$ (note that the term in $n^4$ vanished in the cancelations, but the term in $n^3$ did not, as discussed in the following Remark~\ref{remark11}). From (\ref{eqasymp}), one then obtains that $a_{11}^{(n)}\sim -5 n$, as claimed.
\end{proof}

\begin{remark} Rather than considering the asymptotic of the individual terms, performing the substitution of Lemma~\ref{expressionsLem} into Lemma~\ref{PS} by Maple yields
$$a_{11}^{(n)}=\left(\frac{1}{n+1}{{2n}\choose n}\right)^{\frac{1}{2n}}\frac{ - 20 n^{17} + 69 n^{16} + 10376 n^{15}   + O(n^{14})}{8 n^{16}+ 672 n^{15} + O(n^{14})}.$$
The claim of Lemma~\ref{lemma11} then follows directly as the large-$n$ asymptotic of the above expression.
\end{remark}

\begin{remark} Combining Lemma~\ref{P} and Lemma~\ref{PS} yields a general, albeit unweildy, expression for $a_k^{(n)}$ in terms of multinomial coefficients and products of the type $\prod_{i\leq n} b_{\ell_i,m_i}$ where $\sum \ell_i=\sum m_i=k$ for $\ell_i,m_i\geq 0$. The combinatorial reason for fact that $\lim_{n\to\infty} |a^{(n)}_{11}|=\infty$ whereas the limits of $a^{(n)}_{k}$ for $k<11$ are finite is the fact that $11$ is the least value of $k$ for which there exist at least two pairings on $[2k]$ with $k$ crossings that can be decomposed into irreducible pairings that match in the number of crossings, but mismatch in sizes. Specifically, in the $k=11$ case, a parity-reversing pairing on $[2k]$ with $k$ crossings can be constructed from three irreducible components, with the corresponding sizes and numbers of crossings given by $(n_1=5,k_1=5),(n_2=5,k_2=6)$, and $(n_3=1,k_3=0)$.\footnote{The numbers of ways to choose such irreducible pairings can be read off from Table~\ref{table B}.} At the same time, another valid pairing can be constructed with blocks given by $(\tilde n_1=5,\tilde k_1=5),(\tilde n_2=6,\tilde k_2=6)$, and $(\tilde n_3=0,\tilde k_3=0)$. The key observation at this point is that $k_1=\tilde k_1$, $k_2=\tilde k_2$, $k_3=\tilde k_3$, but $n_2\neq \tilde n_2$ and  $n_3\neq \tilde n_3$. Due to this mismatch, the combinatorial machine of Lemma~\ref{P} which counts the multiplicities of the block sizes and that of Lemma~\ref{PS} which counts the multiplicities of crossings will produce different numbers. In particular, the cancellations that make the $a_k^{(n)}$ limits finite for $k\leq 10$ will not occur.\label{remark11}
\end{remark}

The proof of Theorem~\ref{thmbad} now follows as a corollary of the above lemma.

\vspace{5pt}

\noindent{\it Proof of Theorem~\ref{thmbad}} Suppose that the first claim is false, viz. that there exists some open set $\Omega$ containing the origin on which $\tilde \lambda_n$ is analytic for all $n$. Without loss of generality, $\tilde\lambda_n$ converges point-wise on $\Omega$. By a 1901 result of Osgood \cite{Osgood1901}, there exists some dense open subset $O\subset\Omega$ on whose compact subsets the sequence $\tilde\lambda_n$ converges uniformly. Suppose that the origin belongs to $O$ and note that since $O$ is open, there exists some compact neighborhood of the origin on which $\tilde\lambda_n$ converges uniformly. A classical result from complex analysis (e.g. \cite{RudinRCA}) ensures that if a sequence of analytic functions on an open set in the complex plane converges uniformly on the compact subsets of the set, then for any positive integer $m$ the sequence of analytic functions on $\Omega$ formed by the $m$-fold derivatives of the original sequence converges uniformly on compact subsets of the open set. Thus, if $0\in O$, the eleventh derivative $\tilde\lambda^{(11)}_n$ of $\tilde\lambda_n$ converges uniformly on a compact neighborhood of the origin. But, by Lemma~\ref{lemma11}, this is impossible.
$\hfill\qed$.

\bigskip

\noindent {\bf Acknowledgments.} The author wishes to thank Todd Kemp for suggesting this problem, as well as for his advice and interesting conversations along the way. This paper was written during the author's stay at the University of California in San Diego.

\bibliographystyle{alpha}
\bibliography{q-circular-analyticity}

\begin{thebibliography}{Tou50b}

\bibitem[BBD06]{Biedl2006}
Therese Biedl, Franz Brandenburg, and Xiaotie Deng.
\newblock Crossings and permutations.
\newblock In Patrick Healy and Nikola Nikolov, editors, {\em Graph Drawing},
  volume 3843 of {\em Lecture Notes in Computer Science}, pages 1--12. Springer
  Berlin / Heidelberg, 2006.

\bibitem[Bia97]{Biane1997b}
Philippe Biane.
\newblock Free hypercontractivity.
\newblock {\em Comm. Math. Phys.}, 184(2):457--474, 1997.

\bibitem[BKS97]{Bozejko1997}
Marek Bo{\.z}ejko, Burkhard K{\"u}mmerer, and Rolandb Speicher.
\newblock {$q$}-{G}aussian processes: non-commutative and classical aspects.
\newblock {\em Comm. Math. Phys.}, 185(1):129--154, 1997.

\bibitem[BS91]{Bozejko1991}
Marek Bo{\.z}ejko and Roland Speicher.
\newblock An example of a generalized {B}rownian motion.
\newblock {\em Comm. Math. Phys.}, 137(3):519--531, 1991.

\bibitem[Cor07]{Corteel2007}
Sylvie Corteel.
\newblock Crossings and alignments of permutations.
\newblock {\em Advances in Applied Mathematics}, 38(2):149 -- 163, 2007.

\bibitem[CW10]{Corteel2010}
Sylvie Corteel and Lauren~K. Williams.
\newblock Staircase tableaux, the asymmetric exclusion process, and
  {A}skey-{W}ilson polynomials.
\newblock {\em Proc. Natl. Acad. Sci. USA}, 107(15):6726--6730, 2010.

\bibitem[Err31]{Errera1931}
Alfred Errera.
\newblock Analysis situs. un problème d'énumération.
\newblock {\em Académie royale de Belgique Mémoires de la Classe des
  Sciences}, 11(1421):26, 1931.
\newblock Fascicule 6.

\bibitem[FN00]{Flajolet2000}
Philippe Flajolet and Marc Noy.
\newblock Analytic combinatorics of chord diagrams.
\newblock {\em Technical Report}, n/a:191--201, 2000.

\bibitem[ISV87]{Ismail1987}
Mourad E.~H. Ismail, Dennis Stanton, and G{\'e}rard Viennot.
\newblock The combinatorics of {$q$}-{H}ermite polynomials and the
  {A}skey-{W}ilson integral.
\newblock {\em European J. Combin.}, 8(4):379--392, 1987.

\bibitem[Kem05]{Kemp2005}
Todd Kemp.
\newblock Hypercontractivity in non-commutative holomorphic spaces.
\newblock {\em Communications in Mathematical Physics}, 259:615--637, 2005.

\bibitem[MN01]{Mingo2001}
James Mingo and Alexandru Nica.
\newblock Random unitaries in non-commutative tori, and an asymptotic model for
  {$q$}-circular systems.
\newblock {\em Indiana Univ. Math. J.}, 50(2):953--987, 2001.

\bibitem[NS06]{NicaSpeicher}
Alexandru Nica and Roland Speicher.
\newblock {\em Lectures on the combinatorics of free probability}, volume 335
  of {\em London Mathematical Society Lecture Note Series}.
\newblock Cambridge University Press, Cambridge, 2006.

\bibitem[Osg02]{Osgood1901}
W.~F. Osgood.
\newblock Note on the functions defined by infinite series whose terms are
  analytic functions of a complex variable; with corresponding theorems for
  definite integrals.
\newblock {\em Ann. of Math. (2)}, 3(1-4):25--34, 1901/02.

\bibitem[Pen95]{Penaud1995}
Jean-Guy Penaud.
\newblock Une preuve bijective d'une formule de touchard-riordan.
\newblock {\em Discrete Mathematics}, 139(1-3):347 -- 360, 1995.

\bibitem[Rea79]{Read1979}
Ronald~C. Read.
\newblock The chord intersection problem.
\newblock {\em Ann. New York Acad. Sci.}, 319:444--454, 1979.

\bibitem[Rio75]{Riordan1975}
John Riordan.
\newblock The distribution of crossings of chords joining pairs of {$2n$}
  points on a circle.
\newblock {\em Math. Comp.}, 29:215--222, 1975.
\newblock Collection of articles dedicated to Derrick Henry Lehmer on the
  occasion of his seventieth birthday.

\bibitem[Rud87]{RudinRCA}
Walter Rudin.
\newblock {\em Real and complex analysis}.
\newblock McGraw-Hill Book Co., New York, third edition, 1987.

\bibitem[Sta99]{Stanley1997v2}
Richard~P. Stanley.
\newblock {\em Enumerative combinatorics. {V}ol. 2}, volume~62 of {\em
  Cambridge Studies in Advanced Mathematics}.
\newblock Cambridge University Press, Cambridge, 1999.
\newblock With a foreword by Gian-Carlo Rota and appendix 1 by Sergey Fomin.

\bibitem[Tou50a]{Touchard1950}
Jacques Touchard.
\newblock Contribution \`a l'\'etude du probl\`eme des timbres poste.
\newblock {\em Canadian J. Math.}, 2:385--398, 1950.

\bibitem[Tou50b]{Touchard1950-2}
Jacques Touchard.
\newblock Sur un probl\`eme de configurations.
\newblock {\em C. R. Acad. Sci. Paris}, 230:1997--1998, 1950.

\bibitem[Tou52]{Touchard1952}
Jacques Touchard.
\newblock Sur un probl\`eme de configurations et sur les fractions continues.
\newblock {\em Canadian J. Math.}, 4:2--25, 1952.

\bibitem[VDN92]{Voiculescu1992}
D.~V. Voiculescu, K.~J. Dykema, and A.~Nica.
\newblock {\em Free random variables}, volume~1 of {\em CRM Monograph Series}.
\newblock American Mathematical Society, Providence, RI, 1992.
\newblock A noncommutative probability approach to free products with
  applications to random matrices, operator algebras and harmonic analysis on
  free groups.

\bibitem[Wil05]{Williams2005}
Lauren~K. Williams.
\newblock Enumeration of totally positive {G}rassmann cells.
\newblock {\em Adv. Math.}, 190(2):319--342, 2005.

\end{thebibliography}

\end{document}